\declaretheorem[name=Theorem, numberwithin=section]{theorem}
\declaretheorem[name=Lemma, sibling=theorem]{lemma}
\declaretheorem[name=Claim, sibling=theorem]{claim}
\declaretheorem[name=Observation, style=remark, sibling=theorem]{observation}
\def\cqedsymbol{\ifmmode$\lrcorner$\else{\unskip\nobreak\hfil
\penalty50\hskip1em\null\nobreak\hfil$\lrcorner$
\parfillskip=0pt\finalhyphendemerits=0\endgraf}\fi} 
\newcommand{\cqed}{\renewcommand{\qed}{\cqedsymbol}}
\def\dd{\hbox{-}}
\definecolor{lila}{HTML}{7F00FF}
\definecolor{grun}{HTML}{00994C}
\definecolor{darkorange}{HTML}{D55E00}
\definecolor{rosa}{HTML}{CC79A7}
\definecolor{gelb}{HTML}{F0E442}
\tikzstyle{snode}=[shape= circle, draw=red, ultra thick, fill=red!30]
\tikzstyle{special node} = [shape= circle, draw=blue, ultra thick, fill=blue!10]
\tikzstyle{blue node} = [special node]
\tikzstyle{yellow node} = [shape= circle, draw=gelb, ultra thick, fill=gelb!30]
\tikzstyle{red node} = [snode]
\tikzstyle{green node} =  [shape= circle, draw=grun, ultra thick, fill=grun!20]
\tikzstyle{maybe node} = [shape = circle, draw=black, dotted, thick]
\tikzstyle{maybe new node} =[shape = circle, draw=blue, dotted, ultra thick, fill=blue!10]
\tikzstyle{maybe equal} = [dotted] %make thick?
\tikzstyle{normal node} =[shape=circle,draw=black, fill = white]
\tikzstyle{new edge} = [blue, ultra thick]
\tikzstyle{maybe new edge}=[new edge, decorate]
\tikzstyle{non edge} = [red, dashed]
\tikzset{decoration={snake,amplitude=.7mm,segment length=2.5mm,
                       post length=0mm,pre length=0mm}}
\tikzstyle{maybe edge} = [decorate ]
\tikzstyle{dot} = [shape = circle, draw = black, fill = black, scale=1.1]
\title{A tight local algorithm for the minimum dominating set problem in outerplanar graphs}
\author[1]{Marthe Bonamy\thanks{\texttt{marthe.bonamy@u-bordeaux.fr}, supported by the ANR project GrR (\textsc{ANR-18-CE40-0032}).}}
\author[2]{Linda Cook\thanks{\texttt{lindacook@ibs.re.kr}, supported by AFOSR grant A9550-19-1-0187, NSF grant DMS-1800053 and the Institute for Basic Science (IBS-R029-C1).}}
\author[3]{Carla Groenland\thanks{\texttt{c.e.groenland@uu.nl}, supported by the European Research Council Horizon 2020 project CRACKNP (grant agreement no. 853234).}}
\author[4]{Alexandra Wesolek\thanks{\texttt{agwesole@sfu.ca}, supported by the Vanier Canada Graduate Scholarships program.}}
\affil[1]{CNRS, LaBRI, Universit\'{e} de Bordeaux, Bordeaux, France.}
\affil[2]{Discrete Mathematics Group, Institute for Basic Science (IBS), Daejeon, Republic of Korea.}
\affil[3]{Utrecht University, Utrecht, Netherlands.}
\affil[4]{Department of Mathematics, Simon Fraser University, Burnaby, BC, Canada.}
\date{\today}
\begin{document}
\maketitle

\begin{abstract}
We show that there is a deterministic local algorithm (constant-time distributed graph algorithm) that finds a 5-approximation of a minimum dominating set on outerplanar graphs. We show there is no such algorithm that finds a $(5-\varepsilon)$-approximation, for any $\varepsilon>0$. Our algorithm only requires knowledge of the degree of a vertex and of its neighbors, so that large messages and unique identifiers are not needed.
\end{abstract}

\section{Introduction}\label{sec:intro}
Given a sparse graph class, how well can we approximate the size of the minimum dominating set (MDS) in the graph using a constant number of rounds in the LOCAL model? A \emph{dominating set} of a graph $G=(V,E)$ is a set $S \subseteq V$ such that every vertex in $V \setminus S$ has a neighbor in $S$. Given a graph $G$ and an integer $k$, deciding whether $G$ has a dominating set of size at most $k$ is NP-complete even when restricting to planar graphs of maximum degree three \cite{garey1979computers}. Moreover, the size of the MDS is NP-hard to approximate within a constant factor (for general graphs) \cite{RanRaz}. The practical applications of MDS are diverse but almost always involve large networks \cite{alipour2020distributed}, and it is therefore natural to turn to the the distributed setting. No constant factor approximation of the MDS is possible using a sub-linear number of rounds in the LOCAL model~\cite{kuhn2016local}, and so various structural restrictions have been considered on the graph classes with the hope of finding more positive results  (see~\cite{feuilloley2020bibliography} for an overview). 

Planar graphs are a hallmark case. For planar graphs, guaranteeing that some constant factor approximation can be achieved is already highly non-trivial~\cite{czygrinow2008fast,lenzen2008can}. The current best known upper-bound is $52$~\cite{wawrzyniak2014strengthened}, while the best lower-bound is $7$~\cite{hilke2014brief}. Substantial work has focused on generalizing the fact that some constant factor approximation is possible to more general classes of sparse graphs, like graphs that can be embedded on a given surface, or more recently graphs of bounded expansion~\cite{akhoondian2018distributed,akhoondian2016local, amiri2019distributed, czygrinow18,czygrinow19,kublenz2020distributed}. Tight bounds currently seem out of reach in those more general contexts.

In this paper we focus instead on restricted subclasses of planar graphs. Better approximation ratios can be obtained with additional structural assumptions: $32$ if the planar graph contains no triangle~\cite{alipour2020distributed} and $18$ if the planar graph contains no cycle of length four~\cite{alipour2020local}. These bounds are not tight, and in fact we expect they can be improved significantly. We are able to provide tight bounds for a different type of restriction: we consider planar graphs with no $K_{2,3}$-minor or $K_4$-minor\footnote{For any integer $n \geq 1$, $K_n$ denotes the complete graph on $n$ vertices. For integers $n, m \geq 1$, $K_{n, m}$ denotes the complete bipartite graph with partite classes of size $n$ and $m$.}, i.e. \emph{outerplanar} graphs. Outerplanar graphs can alternatively be defined as planar graphs that can be embedded so that there is a special face which contains all vertices in its boundary.

Outerplanar graphs are a natural intermediary graph class between planar graphs and forests. A planar graph on $n$ vertices contains at most $3n-6$ edges, and a forest on $n$ vertices contains at most $n-1$ edges; an outerplanar graph on $n$ vertices contains at most $2n-3$ edges. Every planar graph can be decomposed into three forests~\cite{nash1964decomposition}; it can also be decomposed into two outerplanar graphs~\cite{gonccalves2005edge}.

For planar graphs, as discussed above, we are far from a good understanding of how to optimally approximate Minimum Dominating Set in $O(1)$ rounds. Let us discuss the case of forests, as it is of very relevant to the outerplanar graph case. For forests, a trivial algorithm yields a $3$-approximation: it suffices to take all vertices of degree at least $2$ in the solution, as well as vertices with no neighbor of degree at least $2$ (that is, isolated vertices and isolated edges). The output is clearly a dominating set, and the proof that it is at most three times as big as the optimal solution is rather straightforward. In fact, the trivial algorithm is tight because of the case of long paths. Indeed, no constant-time algorithm can avoid taking all but a sub-linear number of vertices of a long path, while there is a dominating set containing only a third of the vertices.

\subsection*{Our contribution}
We prove that a similarly trivial algorithm (as the one described for forests above) works to obtain a 5-approximation of MDS for outerplanar graphs in the LOCAL model.
\bigskip

\begin{algorithm}[H]\label{alg:1}
\SetAlgoLined
\KwIn{An outerplanar graph $G$}
\KwResult{A set $S \subseteq V(G)$ that dominates $G$}
 In the first round, every vertex computes its degree and sends it to its neighbors\;
 $S:=\{\textrm{Vertices of degree $\geq 4$}\}\cup\{\textrm{Vertices with no neighbor of degree $\geq 4$}\}$\;
 \caption{A local algorithm to compute a dominating set in outerplanar graphs}
\end{algorithm}
\bigskip
It is easy to check that the algorithm indeed outputs a dominating set. It is significantly harder to argue that the resulting dominating set is at most $5$ times as big as one of minimum size. To do that, we delve into a rather intricate analysis of the behavior of a hypothetical counterexample, borrowing tricks from structural graph theory (see Lemma~\ref{lem:outernew}).

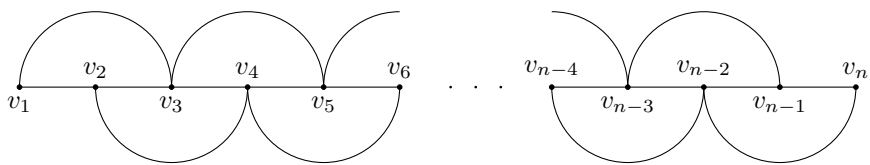
\begin{figure}[!h]
    \centering
    \begin{tikzpicture}
    
    \foreach \i in {1,3,5} {
    \draw [fill] (\i-1,0) circle [radius=0.03] node[below] {$v_{\i}$};
    }
    \foreach \i in {2,4,6} {
    \draw [fill] (\i-1,0) circle [radius=0.03] node[above] {$v_{\i}$};
    }
    
       \foreach \i in {1,3} {
    \draw [fill] (11-\i,0) circle [radius=0.03] node[below] {$v_{n-\i}$};
    }
    
    \foreach \i in {2,4} {
    \draw [fill] (11-\i,0) circle [radius=0.03] node[above] {$v_{n-\i}$};
    }
        \draw [fill] (11,0) circle [radius=0.03] node[above] {$v_{n}$};
    
    \foreach \i in {1,...,5} {
    \draw (\i-1,0) -- (\i,0);
    }
        \foreach \i in {2,4,10} {
    \draw (\i,0) arc (0:180:1);
    }
    \draw (4,0) arc (180:90:1);

     \draw (8,0) arc (0:90:1);
            \foreach \i in {1,3,7,9} {
    \draw (\i,0) arc (180:360:1);
    }
    
     \draw [fill] (5.67,0) circle [radius=0.01];
    \draw [fill] (6,0) circle [radius=0.01];
    \draw [fill] (6.33,0) circle [radius=0.01];
     \foreach \i in {8,...,11} {
    \draw (\i-1,0) -- (\i,0);
    }
    \end{tikzpicture}
    \caption{The graph $G^-_n$ is a path $v_1,\dots,v_n$ to which we add all edges between vertices of distance two. In this example $n$ is even.}
    \label{fig:LowerBoundGraph}
\end{figure}

 The proof that the bound of $5$ is tight for outerplanar graphs is similar to the proof that the bound of $3$ is tight for trees. Every graph in the family depicted in Figure~\ref{fig:LowerBoundGraph} is outerplanar, and every local algorithm that runs in a constant number of rounds selects all but a sub-linear number of vertices~\cite[pp.~87--88]{czygrinow2008fast}. Informally, all but a sub-linear number of vertices ``look the same'' -- see Section~\ref{sec:lowerbound} for more details, and~\cite{suomela2013survey} for an excellent survey of lower bounds. 

Our main result is the following.
\begin{theorem}\label{th:outerplanar}
There is an algorithm that computes a $5$-approximation of Minimum Dominating Set for outerplanar graphs in $O(1)$ rounds in the LOCAL model. This is tight, in the sense that no algorithm can compute a  $(5-\varepsilon)$-approximation with the same constraints, for any $\varepsilon>0$.
\end{theorem}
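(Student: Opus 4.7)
The plan is to prove the two halves of Theorem~\ref{th:outerplanar} separately. For the upper bound I would analyze Algorithm~\ref{alg:1} directly, relying on a charging argument that turns a structural property of outerplanar graphs into the approximation guarantee. For the lower bound I would use the graphs $G_n^-$ of Figure~\ref{fig:LowerBoundGraph} together with the usual constant-round indistinguishability argument.

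For the upper bound, let $S = H \cup I$ be the set returned by the algorithm, where $H$ is the set of vertices of degree at least $4$ and $I$ is the set of vertices with no neighbor of degree at least $4$. Covering is immediate: if $v \notin S$ then $v$ has a neighbor of degree at least $4$, which belongs to $H \subseteq S$. Fix an optimum dominating set $D^\ast$ and, for each $v \in S$, select some $d_v \in D^\ast \cap N[v]$. I would then bound the number of vertices charged to a single $d \in D^\ast$. If $\deg(d) \le 3$ then $|N[d]| \le 4$, so $d$ receives at most $4$ charges trivially. If $\deg(d) \ge 4$ then no $v \ne d$ with $v \in I$ can have $d_v = d$, since $v \in N(d)$ would force $v$ to have the high-degree vertex $d$ as a neighbor, contradicting $v \in I$. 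Hence every $v \ne d$ charged to $d$ lies in $H \cap N(d)$, and the number of charges is at most $1 + |\{v \in N(d) : \deg(v) \ge 4\}|$.

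The crux of the upper bound is therefore the structural statement that, in an outerplanar graph, a vertex $d$ of degree at least $4$ has at most $4$ neighbors of degree at least $4$; combined with the two cases above this yields $|S| \le 5\,|D^\ast|$. This is presumably what Lemma~\ref{lem:outernew} supplies, using the $K_{2,3}$-minor-freeness of outerplanar graphs to prevent ``clumping'' of high-degree vertices. A promising route is to fix an outerplanar embedding, examine the cyclic order of the neighbors of $d$ along the outer face, and use the edge bound $|E(G)| \le 2|V(G)|-3$ together with a case analysis on how consecutive high-degree neighbors of $d$ can be joined by chords. I expect this structural claim, or a more intricate discharging variant of it (in case the pointwise ``at most $4$'' bound fails but an amortized version survives), to be the main obstacle.

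For the lower bound, one first checks that $G_n^-$ is outerplanar: place $v_1, \dots, v_n$ on a horizontal line and route the distance-$2$ edges as arcs alternately above and below the line; in the resulting embedding every vertex lies on the outer face. Every interior vertex has degree exactly $4$, so the set $\{v_3, v_8, v_{13}, \dots\}$ is a dominating set of size $\lceil n/5 \rceil + O(1)$. Now consider any deterministic constant-round algorithm $\mathcal{A}$ that reads only degree information within a constant-radius ball. For any fixed radius $r$, all but $O(1)$ interior vertices of $G_n^-$ have isomorphic $r$-neighborhoods carrying identical degree labels, so $\mathcal{A}$ must treat them identically; since its output must dominate $G_n^-$, it has to include essentially all of them, giving a solution of size $n - o(n)$. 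The ratio to the optimum $\lceil n/5 \rceil$ tends to $5$, as worked out in detail in~\cite[pp.~87--88]{czygrinow2008fast} and in the framework of~\cite{suomela2013survey}. Combining the two halves proves Theorem~\ref{th:outerplanar}.
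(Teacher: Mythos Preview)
The pointwise structural claim at the heart of your upper bound---that in an outerplanar graph every vertex of degree at least $4$ has at most four neighbours of degree at least $4$---is false. Take a fan with apex $d$ and rim path $v_1v_2\cdots v_k$, so that $d$ is adjacent to every $v_i$; then attach one leaf to each interior $v_i$ and two leaves to each of $v_1,v_k$. The result is outerplanar, yet $d$ has $k$ neighbours of degree exactly $4$, for $k$ as large as you like. Thus the bound ``at most five charges per optimum vertex'' can fail, and your charging argument as written does not establish the $5$-approximation. You anticipate this by mentioning a possible ``amortized version'', and that is exactly what Lemma~\ref{lem:outernew} asserts: not a local bound on high-degree neighbours, but the global inequality $|S|\ge\tfrac14\bigl(|B_S(G)|+|D_S(G)|\bigr)$ for every dominating set $S$. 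Proving it, however, is the main technical content of the paper---a long minimal-counterexample analysis (Claims~\ref{cl:DS}--\ref{cl:w2exists} together with Appendix~\ref{app:case-analysis}) that contracts $G$ onto $S$ to form the auxiliary multigraph $H_G(S)$, locates a vertex of degree at most two there, and exhaustively rules out every local configuration around it. The route you sketch (cyclic order of neighbours plus the edge bound $2|V|-3$) is much closer in spirit to the warm-up Lemma~\ref{lem:outerweak}, which only yields a $40$-approximation.

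Your lower-bound outline matches Section~\ref{sec:lowerbound}, with one slip: in the LOCAL model vertices carry unique identifiers, so isomorphic $r$-balls with identical degree labels do \emph{not} by themselves force identical outputs. The argument of~\cite{czygrinow2008fast} neutralises the identifiers via iterated applications of Ramsey's theorem, and you should invoke that explicitly (as the paper does) rather than assume anonymity.
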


In other words, there is a trivial local algorithm for Minimum Dominating Set in outerplanar graphs that turns out to be tight. All the difficulty lies in arguing that the approximation factor is indeed correct.

We note that the algorithm is so trivial that every vertex only needs to send one bit of information to each of its neighbors (``I have degree at least $4$'' or ``I have degree at most $3$''). The network might be anonymous -- names are not useful beyond being able to count the number of neighbors, and the solution is extremely easy to update when there is a change in network. For contrast, in anonymous planar graphs the best known approximation ratio is 636~\cite{wawrzyniak2013brief}.

It is important to note that there is no hope for such a trivial algorithm in the case of planar graphs. Indeed, in Figure~\ref{fig:planarno}, we can see that for any $p$, no algorithm taking all vertices of degree $\geq p$ in the solution can yield a constant-factor approximation in planar graphs. However, the case of outerplanar graphs shows that the road to a better bound for planar graphs might go through finer structural analysis rather than smarter algorithms.

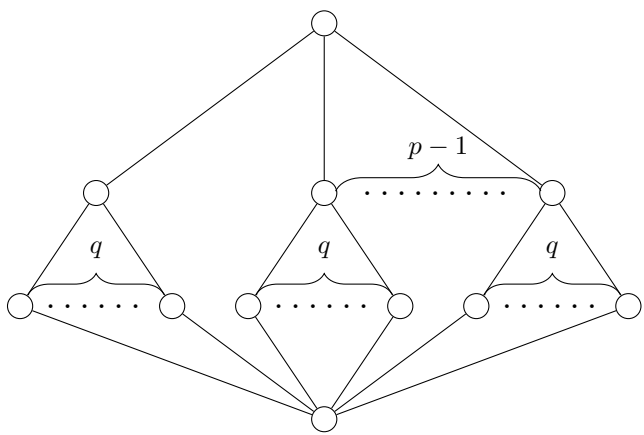
\begin{figure}[!h]
    \centering
    %\section{Linda 1}
\begin{tikzpicture}

\def\height{1.5};
\def\width{1};

\node[normal node] (a) at (4*\width, 2.5*\height){};

\node[normal node] (c_1) at (0,0) {};
\node[normal node] (c_2) at (2*\width,0){};
\node[normal node] (b_1) at (\width, \height){};
\draw (b_1) to (c_1);
\draw (b_1) to (c_2);
\draw (b_1) to (a);
\node at (\width, 0) {\Large $\dots \dots$};
\draw [decoration={brace,amplitude=8pt},decorate,] ($(c_1)+(.3em,1ex)$) -- ($(c_2)+(-.3em,1ex)$);
\node at (\width, 0.5*\height) {$q$};

\node[normal node] (c_3) at (3*\width,0) {};
\node[normal node] (c_4) at (5*\width,0){};
\node[normal node] (b_2) at (4*\width, \height){};
\draw (b_2) to (c_3);
\draw (b_2) to (c_4);
\draw (b_2) to (a);
\node at (4*\width, 0) {\Large $\dots \dots$};
\draw [decoration={brace,amplitude=8pt},decorate,] ($(c_3)+(.3em,1ex)$) -- ($(c_4)+(-.3em,1ex)$);
\node at (4*\width, 0.5*\height) {$q$};

\node[normal node] (c_5) at (6*\width,0) {};
\node[normal node] (c_6) at (8*\width,0){};
\node[normal node] (b_3) at (7*\width, \height){};
\draw (b_3) to (c_5);
\draw (b_3) to (c_6);
\draw (b_3) to (a);
\node at (7*\width, 0) {\Large $\dots \dots$};
\draw [decoration={brace,amplitude=8pt},decorate,] ($(c_5)+(.3em,1ex)$) -- ($(c_6)+(-.3em,1ex)$);
\node at (7*\width, 0.5*\height) {$q$};

\node at (5.5*\width, \height) {\Large $\dots \dots \dots$};
\draw [decoration={brace,amplitude=10pt},decorate,] ($(b_2)+(.43em,.18ex)$) -- ($(b_3)+(-.43em,.18ex)$);
\node at (5.5*\width,\height + .6) {$p-1$};

\node[normal node] (d) at (4*\width, -\height) {};
\draw (c_1) to (d);
\draw (c_2) to (d);
\draw (c_3) to (d);
\draw (c_4) to (d);
\draw (c_5) to (d);
\draw (c_6) to (d);
\end{tikzpicture}
    \caption{For any $p, q \in \mathbb{N}$, there is a planar graph $G_{p,q}$ which admits a dominating set of size $2$ such that $|\{\textrm{Vertices with degree }\geq q\}|\geq p$}.
    \label{fig:planarno}
\end{figure}

\subsection*{Definitions and notation}

For a vertex set $A \subseteq V$, let $G[A]$ denote the induced subgraph of $G$ with vertex set $A$. Let $E(A)$ denote set of edges of $G[A]$. For vertex sets $A, B \subseteq G$, let $E(A, B)$ denote the set of edges in $G$ with one end in $A$ and the other end in $B$. We write $G\setminus e$ for the graph in which the edge $e$ is removed from the edge set of $G$. For a set $P\subseteq V$ inducing a connected subgraph, we write $G/P$ for the graph obtained by contracting the set $P$: we replace the vertices in $P$ with a new vertex $v_P$, which is adjacent to $u\in V\setminus P$ if and only if $u$ has some neighbor in $P$. 
For a set $X$ of vertices, we let $N[X]$ denote the set $ X \cup \cup_{x \in X}N(x)$ and we let $N(X)$ denote the set $N[X] \setminus X$.
If $x_1, x_2, \dots, x_k$ are the elements of $X$, we may also denote $N[X]$ and $N(X)$ as $N[x_1, x_2, \dots, x_k]$ and $N(x_1, x_2, \dots, x_k)$, respectively.

Given a graph $G$, let $V_{4+}(G)$ denote the set of vertices of degree at least $4$ in $G$, and let $V^*(G)$ denote the set $V(G)\setminus N[V_{4+}(G)]$. In other words, $V^*(G)$ is the set of vertices of degree at most $3$ in $G$ which only have neighbors of degree at most $3$. 
For a graph $G$ and a dominating set $S$ of $G$, we denote $V_{4^+}(G)\setminus S$ by $B_S(G)$ and we denote $V^*(G) \setminus S$ by $D_S(G)$. We additionally let $A_S(G)$ denote the set $V(G) \setminus (S \cup D_S(G) \cup B_S(G))$. In situations where our choice of $G,S$ is not ambiguous we will simply write $B,D, A$ for $B_S(G)$, $D_S(G)$ and $A_S(G)$, respectively.
An overview of the notation is given in Table \ref{tab:notation}.\\
\begin{table}[htb]
\begin{tabular}{c|c|c|c}
    $v$ is an element of  & $\text{deg}(v)$ & degrees of neighbors of $v$ & further restrictions \\
    \hline
     $V_{4^+}(G)$ &  $\geq 4$ & arbitrary & -\\
     $B_S(G)$ &  $\geq 4$ & arbitrary & $v\notin S$\\
     $V^*(G)$ & $\leq 3$ & $\leq 3$ & -\\
     $D_S(G)$ &  $\leq 3$ & $\leq 3$ & $v\notin S$\\
     $A_S(G)$ & $\leq 3$ & at least one neighbor of degree $\geq 4$ & $v\notin S$\\
\end{tabular}
\caption{An overview of the notation used in Section \ref{sec:upperbound}.}
\label{tab:notation}
\end{table}

An outerplanar embedding of $G$ is an embedding in which a special \emph{outer face} contains all vertices in its boundary. 

We denote by $H_G(S)$ the multigraph with vertex set $S$, obtained from $G$ as follows. For every vertex $u$ in $V(G) \setminus S$, we select a neighbor $s(u) \in N(u) \cap S$, and contract the edge $\{u,s(u)\}$. Contrary to the contraction operation mentioned earlier, this may create parallel edges, but we delete all self-loops. The resulting multigraph inherits the set $S$ as its vertex set.  We refer to Figure $\ref{fig:HGraph}$ for an example. 
\begin{figure}[h]
\begin{tikzpicture}[scale=0.5]   
\node (before) at (0,0){
    \begin{tikzpicture}[scale=0.5]  
    \def\height{1.5};
    \def\width{2};
    
    \node[snode] (s_2) at (0, 0){$s_2$};
    \node[snode] (s_1) at (3*\width ,0){$s_1$};
    \node[snode] (s_3) at (5* \width, 0){$s_3$};
    \node[snode] (s_4) at (3*\width ,-3){$s_4$};
    
    \node[normal node] (b_1) at (3*\width, 2*\height) {$b_1$};
    \node[normal node] (a_2) at (2*\width, 0){$a_2$};
    
    \node[normal node] (a_1) at (1*\width, \height){$a_1$};
    \node[normal node] (d_1) at (1*\width, -\height){$d_1$};
    
    \node[normal node] (a_3) at (4*\width, \height){$a_3$};
    \node[normal node] (a_4) at (4*\width, -1*\height){$a_4$};
    
    \node[normal node] (d_2) at (3*\width ,-5){$d_2$};
    
    \draw (s_1) to (b_1);
    \draw (s_1) to (a_2);
    \draw (b_1) to (a_1);
    \draw (a_2) to (d_1);
    \draw (s_2) to (a_1);
    \draw (s_2) to (d_1);
    \draw (s_3) to (a_3);
    \draw (s_3) to (a_4);
    \draw (b_1) to (a_3);
    \draw (b_1) to (a_4);
    \draw (s_4) to (s_1);
    \draw (s_4) to (a_4);
    \draw (s_4) to (d_2);
    \end{tikzpicture}
    };
    \node(after) at (before.east)[anchor=east,xshift=7.5cm,yshift=-0.2cm]{
    \begin{tikzpicture}[scale=.5]
    \def\height{1.5};
    \def\width{2};
    \node[snode] (s_2) at (0, 0){$s_2$};
    \node[snode] (s_1) at (3*\width ,0){$s_1$};
    \node[snode] (s_3) at (5* \width, 0){$s_3$};
    \node[snode] (s_4) at (3*\width ,-3){$s_4$};
    \draw[bend left=20] (s_2) to (s_1);
    \draw[bend left=20] (s_1) to (s_2);
    \draw[bend left=20] (s_1) to (s_3);
    \draw[bend left=20] (s_3) to (s_1);
    \draw (s_1) to (s_4);
    \draw (s_3) to (s_4);
\end{tikzpicture}
};
    \end{tikzpicture}
    \caption{On the left is a graph $G$ with dominating set $S=\{s_1,s_2,s_3,s_4\}$. The vertex $s(u)$ is uniquely determined for all $u \neq a_4$. On the right is the graph $H_G(S)$ for $s(a_4)=s_3$. }
    \label{fig:HGraph}
\end{figure}
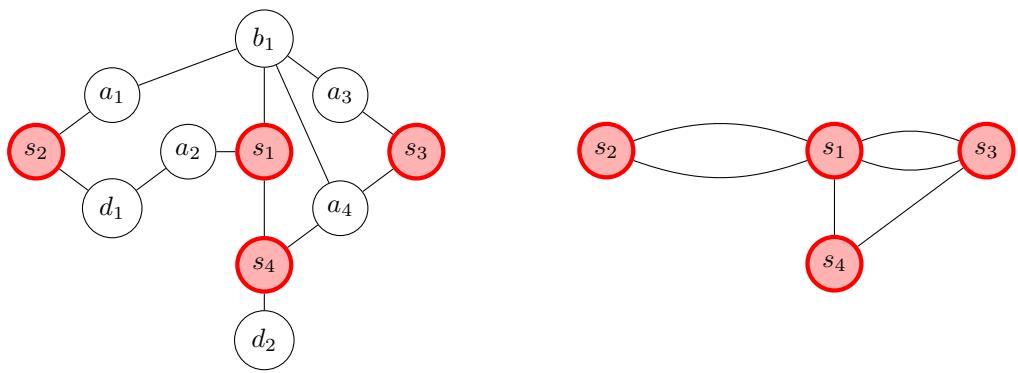

Note that $H_G(S)$ inherits an outerplanar embedding from $G$.
If the graph $G$ and the dominating set $S$ are clear, we will write $H$ for $H_G(S)$. Lemma $\ref{lem:outerweak}$ provides some intuition as to why the graph $H$ is useful.

\subsection*{Properties of outerplanar graphs}

Here we mention some standard but useful properties of outerplanar graphs.
A graph $H$ is a \emph{minor} of a graph $G$ if $H$ can be obtained from $G$ through a series of vertex or edge deletions and edge contractions. Alternatively, an $H$-minor of $G$ consists of a connected set $X_h\subseteq V(G)$ for each $h\in V(H)$ and a set of paths $\{P_{hh'}| \, hh'\in E(H)\}$, where $P_{hh'}$ is a path in $G$ between a vertex in $X_h$ and a vertex in $X_{h'}$, all of which are pairwise vertex-disjoint except for possibly their ends. Note that any minor of an outerplanar graph is outerplanar. 
%In other words, outerplanar graphs are closed under taking minors.
Neither $K_4$ nor $K_{2,3}$ can be drawn in the plane so that all vertices appear on the boundary of a special face. Therefore, outerplanar graphs are $K_4$-minor-free and $K_{2,3}$-minor-free. 

Any outerplanar graph $G$ satisfies $|E(G)|\leq 2|V(G)|-3$ by a simple application of Euler's formula. It follows immediately that every outerplanar graph contains a vertex of degree at most $3$, but a standard structural analysis guarantees that every outerplanar graph contains a vertex of degree at most $2$.

\section{Analysis of the approximation factor}
\label{sec:upperbound}
This section is devoted to proving the following result. (An overview of the relevant notation is given in Table \ref{tab:notation}.)

\begin{lemma}\label{lem:outernew}
For every outerplanar graph $G$, any dominating set $S$ of $G$ satisfies $|S|\geq \frac14(|B_S(G)|+|D_S(G)|)$.
\end{lemma}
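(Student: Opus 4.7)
I would prove the inequality by induction on $|V(G)|$ via the contrapositive: assume a minimum counterexample $(G,S)$ with $|S| < \frac14(|B_S(G)|+|D_S(G)|)$ and derive a contradiction. Concretely, the target is to exhibit a local reduction producing $(G',S')$ with $|V(G')|<|V(G)|$ and $|S'|-\frac14(|B_{S'}(G')|+|D_{S'}(G')|)\leq |S|-\frac14(|B_S(G)|+|D_S(G)|)$, so that $(G',S')$ is also a counterexample, contradicting minimality.

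\textbf{Key steps.} First I would rule out trivially reducible pieces: isolated vertices of $G$ must belong to $S$ and can be deleted from both $G$ and $S$, strictly worsening the defect; similar cleanups handle small components whose contribution to the inequality can be checked by hand. Next, since $G$ is outerplanar, it contains a vertex $v$ with $\deg_G(v)\leq 2$, so $v\notin V_{4^+}(G)$ and hence $v\in S\cup D_S(G)\cup A_S(G)$. I would case-split on which of these $v$ lies in and on the classification (in $S$, $B$, $D$, or $A$) and degree of $v$'s at most two neighbors, and in each case either delete $v$, contract an edge incident to $v$, or swap $v$ for a neighbor in $S$, then carefully verify the defect bookkeeping. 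For local configurations that resist such a direct reduction, I would switch to the contracted multigraph $H=H_G(S)$—itself outerplanar on vertex set $S$—to control edges globally, and I would invoke that $G$ is $K_4$-minor-free and $K_{2,3}$-minor-free to bound how many $B$- or $D$-vertices can share a common $S$-neighbor (or two common $S$-neighbors).

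\textbf{Main obstacle.} The delicate point is the cascading fragility of the four-way partition $V(G)=S\sqcup B_S\sqcup D_S\sqcup A_S$: a vertex $u$ with $\deg_G(u)=4$ leaves $V_{4^+}(G)$ as soon as a single neighbor of $u$ is deleted, which can in turn push neighbors of $u$ from $A_S$ into $V^*$ (and thus into $D_S$ whenever they lie outside $S$). Tracking these chain reactions, and checking that the defect $\frac14(|B_S|+|D_S|)-|S|$ never strictly decreases across any of our reductions, is the technical heart of the argument. I expect the full proof to require a substantial case analysis organized around the precise local structure at the reducible low-degree vertex, which is presumably the source of the intricacy the authors advertise.
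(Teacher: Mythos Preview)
Your plan has a genuine structural gap in the choice of reduction point and the minimality order.

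Reducing at a degree-$\leq 2$ vertex $v$ of $G$ does not work: if $v\in D_S(G)$ or $v\in A_S(G)$, deleting or contracting $v$ leaves $|S|$ unchanged while $|B_S|+|D_S|$ can only drop, so the defect $|B_S|+|D_S|-4|S|$ decreases and you no longer have a counterexample. The reduction must remove a vertex of $S$, and the right object on which to find a low-degree vertex is not $G$ but the contracted multigraph $H=H_G(S)$: a vertex $s_1$ with $\deg_H(s_1)\leq 2$ is an element of $S$ whose closed neighbourhood $N[s_1]$ has edges to at most two other ``$S$-blocks''. The paper's entire case analysis is organised around deleting $N[s_1]$ (which decreases $|S|$ by exactly one) and arguing that at most four vertices are lost from $B\cup D$, possibly after adding carefully chosen edges between the surviving neighbours of $N[s_1]$ to repair degrees. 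You mention $H$ only as a fallback for global edge-counting, but in fact it is the primary reduction site.

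Relatedly, your minimality on $|V(G)|$ is too restrictive. Several of the paper's reductions \emph{add} edges (to restore degree-$4$ status to a vertex that would otherwise drop out of $B$) or even \emph{split} a vertex into two (to increase $|B\cup D|$); these are indispensable for the tight constant $4$ but are forbidden under $|V(G)|$-minimality. The paper instead takes a counterexample minimal for the triple $(|S|,\,-|B_S\cup D_S|,\,|E(G)|)$ in lexicographic order, which permits such operations. This order is only well-defined once one knows $|B_S\cup D_S|$ is bounded in terms of $|S|$, which is why the paper first proves a crude bound $|B_S|+|D_S|\leq 39|S|$ (Lemma~\ref{lem:outerweak}) before embarking on the sharp argument. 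Without these two ingredients---the $H$-based reduction and the nonstandard minimality allowing edge/vertex additions---the bookkeeping you anticipate cannot be made to balance at the constant $4$.
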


We briefly argue that Lemma~\ref{lem:outernew} yields the desired result. Given an outerplanar graph, Algorithm~\ref{alg:1} outputs $V_{4^+}(G) \cup V^*(G)$ as a dominating set. To argue that it is a $5$-approximation of the Minimum Dominating Set problem, it suffices to prove that any dominating set $S$ of $G$ satisfies $|S|\geq \frac15 (|V_{4^+}(G) \cup V^*(G)|)$. For technical reasons, it is easier to bound $S$ as a function of the vertices in $V_{4^+}(G) \cup V^*(G)$ that are not in $S$, i.e.
$|S|\geq \frac14(|B_S(G)|+|D_S(G)|)$, which yields $|S|\geq \frac15 (|V_{4^+}(G) \cup V^*(G)|)$.

We prove the lemma by analyzing the structure of a ``smallest'' counterexample. A counterexample satisfies 
\[    |S| < \frac14 (|B_{S}(G)|+|D_S(G)|),
\]
 and we will choose one which minimizes $|S|$ and with respect to that maximizes $|B_S(G)|+| D_S(G)|$. For this, we need that $|B_S(G)|+|D_S(G)|$ is bounded in terms of $|S|$  by some constant, otherwise a counterexample maximizing $|B_S(G)|+|D_S(G)|$ might not exist since $|B_S(G)|+|D_S(G)|$ could be arbitrarily large. We therefore first prove the following much weaker result.
\begin{lemma}
\label{lem:outerweak}
For every outerplanar graph $G$, any dominating set $S$ of $G$ satisfies $|S|\geq \frac1{39}(|B_S(G)|+|D_S(G)|)$.
\end{lemma}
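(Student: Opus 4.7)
The plan is to bound $|D_S(G)|$ and $|B_S(G)|$ separately as linear functions of $|S|$, and then combine the two estimates.

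\textbf{Bounding $|D_S(G)|$.} This is straightforward. For each $d \in D_S(G) \subseteq V^*(G)$, every neighbor of $d$ has degree at most~$3$; since $S$ dominates $G$ and $d\notin S$, there is some $s \in S\cap N(d)$ of degree at most~$3$. Such an $s$ can dominate at most $3$ vertices of $D_S(G)$. Summing over $s\in S$ gives $|D_S(G)|\leq 3|S|$.

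\textbf{Bounding $|B_S(G)|$.} For each $s\in S$, set $T_s=\{s\}\cup \{u\notin S : s(u)=s\}$. Let $\widetilde{H}$ denote the multigraph (with self-loops kept) obtained from $G$ by contracting each $T_s$ into $s$. A direct count gives
\[
  |E(\widetilde{H})| = |E(G)| - \sum_{s\in S}(|T_s|-1) = |E(G)| - (|V(G)|-|S|) \leq |V(G)|+|S|-3,
\]
using $|E(G)|\leq 2|V(G)|-3$. In the other direction, the degree identity $\deg_{\widetilde{H}}(s) = \sum_{v\in T_s}\deg_G(v) - 2(|T_s|-1)$, together with $\deg_G(b)\geq 4$ for $b\in B_S(G)$, $\deg_G(v)\geq 1$ for every $v\notin S$, and $\sum_{s\in S}\deg_G(s)\geq |V(G)|-|S|$ (from the domination property), telescopes to $2|E(\widetilde{H})|\geq 3|B_S(G)|$. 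Combining the two bounds and expanding $|V(G)|=|S|+|A_S(G)|+|B_S(G)|+|D_S(G)|$ gives
\[
  |B_S(G)| \leq 4|S|+2|A_S(G)|+2|D_S(G)|.
\]

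\textbf{Closing the loop.} Substituting $|D_S(G)|\leq 3|S|$ turns this into $|B_S(G)|\leq 10|S|+2|A_S(G)|$, so it remains to bound $|A_S(G)|$ linearly in $|S|$. I would split $A_S(G)$ into the vertices that have some low-degree dominator in $S$ (there are at most $3|S|$ of them, by the same argument as for $D_S(G)$) and the vertices whose $S$-dominators all lie in $V_{4^+}(G)\cap S$. For the second group I would exploit that outerplanar graphs are $K_{2,3}$-minor-free, so any two vertices share at most two common neighbors; as every such $a \in A_S(G)$ lies as a common neighbor of its dominator in $V_{4^+}(G)\cap S$ and of some high-degree witness in $V_{4^+}(G)$, this restricts the count, and a routine (but slightly intricate) combination should give $|A_S(G)|\leq c|S|$ for some crude constant $c$. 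This in turn yields $|B_S(G)|\leq 36|S|$ and hence $|B_S(G)|+|D_S(G)|\leq 39|S|$, as claimed. The main obstacle is precisely this last bound on $|A_S(G)|$; the constant $39$ is far from optimal, but Lemma~\ref{lem:outerweak} only needs to guarantee the compactness of a smallest counterexample in the proof of Lemma~\ref{lem:outernew}, so no tightening is attempted.
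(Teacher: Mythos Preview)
Your bounds $|D_S(G)|\leq 3|S|$ and $|B_S(G)|\leq 4|S|+2|A_S(G)|+2|D_S(G)|$ are both correct, but the ``closing the loop'' step contains a genuine gap: $|A_S(G)|$ is \emph{not} bounded by any function of $|S|$. Take $G=K_{1,n}$ with $n\geq 4$ and let $S$ consist of the center vertex alone. Every leaf then lies in $A_S(G)$ (it has degree $1$ and its unique neighbor has degree $n\geq 4$), so $|A_S(G)|=n$ while $|S|=1$. Your proposed splitting of $A_S(G)$ does not help here: each leaf's only $S$-dominator is the center, which lies in $V_{4^+}(G)\cap S$, so all $n$ leaves fall into your second group; and since a leaf has only one neighbor, it is not a common neighbor of two distinct high-degree vertices, so the $K_{2,3}$-minor-free argument you sketch cannot apply. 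No bound of the form $|A_S(G)|\leq c|S|$ exists.

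The underlying problem is that your upper bound $|E(\widetilde{H})|\leq |V(G)|+|S|-3$ passes through $|V(G)|$, which is not controlled by $|S|$. The paper's proof avoids this entirely by bounding the edge count of the contracted multigraph $H_G(S)$ directly in terms of $|S|$: using that $G$ is $K_{2,3}$-minor-free, it shows every edge of $H_G(S)$ has multiplicity at most $9$, and since the simple graph underlying $H_G(S)$ is an outerplanar minor of $G$ on vertex set $S$, it has at most $2|S|-3$ edges. Hence $|E(H_G(S))|\leq 18|S|$. A separate observation (each $b\in B_S(G)$ has a neighbor outside $N[s(b)]$, again by $K_{2,3}$-freeness) shows that each such $b$ contributes at least half an edge to $H_G(S)$, giving $|B_S(G)|\leq 36|S|$ with no reference to $|A_S(G)|$ or $|V(G)|$ at all.
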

We did not try to optimize the constant $39$ and rather aim to get across some of the main ideas as clearly as possible. The proof shows the importance of the graph $H_G(S)$, which we will also use in the proof of Lemma $\ref{lem:outernew}$.
\begin{proof}[Proof of Lemma \ref{lem:outerweak}]
We may assume that the graph $G$ is connected; otherwise, we can repeat the same argument for each connected component of $G$. We fix an outerplanar embedding of $G$. For each $u \in V(G)\setminus S$ we select an arbitrary neighbor $s(u)\in N(u)\cap S$ that we contract it with (keeping parallel edges but removing self-loops), resulting in the multigraph $H_G(S)$ on the vertex set $S$. The key step in our proof is showing that $H_G(S)$ has bounded edge multiplicity. Indeed, every edge $s_1s_2$ in $H_G(S)$ is obtained from $G$ by contracting at least one vertex or from the edge $s_1s_2$ in $G$.
For $i\in \{1,2\}$, let $V_i$ be the set of vertices contracted to $s_i$ that gave an edge between $s_1$ and $s_2$ in $H_G(S)$. Since there is no $K_{2,3}$-minor in $G$ (as $G$ is outerplanar), we find $|V_1|\leq 2$ and $|V_2|\leq 2$. Any edge between $s_1$ and $s_2$ in $H_G(S)$ can now be associated with an edge between $\{s_1\}\cup V_1$ and $\{s_2\}\cup V_2$ in $G$, and hence edges in $H_G(S)$ have multiplicity at most $9$ (this is far from tight).

We derive that $|E(H_G(S))|\leq 9 |E(H')|$, where $H'$ is the simple graph underlying $H_G(S)$ (i.e. the simple graph obtained by letting  $s_1,s_2\in S$ be adjacent in $H'$ if and only if there is an edge between them in $H_G(S)$). Note that $H'$ is a minor of $G$. Since outerplanar graphs are closed under taking minors, the graph $H'$ is an outerplanar graph. It follows that $|E(H')|\leq 2|S|-3$. Combining both observations, we get $|E(H_G(S))|\leq 18 |S|$.

By outerplanarity, we have $|E(H_G(S))|\geq \frac12|B_S(G)|$. Indeed, each vertex $u\in B_S(G)$ has at most two common neighbors with $s(u)$ (otherwise there would be a $K_{2,3}$), hence $u$ has at least one neighbor $v$ such that $v \not\in N[s(u)]$. The edge $uv$ corresponds to an edge in  $E(H_G(S))$, hence each $u\in B_S(G)$ contributes at least half an edge to $E(H_G(S))$ (as $v$ could be also in $B_S(G)$).
We derive $\frac12|B_S(G)| \leq |E(H_G(S))|\leq 18 |S|$. We observe that $|D_S(G)| \leq 3|S|$: indeed, each vertex from $S$ is adjacent to at most $3$ vertices from $D_S(G)$, since any vertex adjacent to a vertex in $D_S(G)$ has degree at most $3$ by definition. We conclude that $|B_S(G)|+|D_S(G)|\leq (36+3)|S| = 39 |S|$. 
\end{proof}
In Lemma~\ref{lem:outerweak} we use that edges in $H$ have low multiplicity, from which we then obtain a bound on the size of $S$. In order to improve the bound from Lemma  \ref{lem:outerweak}, we dive into a deeper analysis of the graph $H$.
\begin{proof}[Proof of Lemma \ref{lem:outernew}]
We will consider a special counterexample $(G,S)$ (satisfying $|S|<\frac14(|B_S(G)|+|D_S(G)|)$) so that our counterexample has a structure we can deal with more easily than a general counterexample.
%In particular we will choose a counterexample $(G, S)$ amongst those that minimizes $|S|$ and with respect to that maximizes $|B_S(G) \cup D_S(G)|$ and with respect to that 
Namely, we assume that $(G,S)$ in order:
%minimizes $|S|$; maximizes $|B_S(G) \cup D_S(G)|$; minimizes $|E(B_S(G))|$; minimizes $|V(G)|$; maximizes $|E(S,N(S))|$; minimizes $|E(G)|$. 
minimizes $|S|$; maximizes $|B_S(G) \cup D_S(G)|$; minimizes $|E(G)|$.
%Note that this is well-defined since we established $|B_S(G)\cup D_S(G)| \leq 39 |S|$, and clearly $|E(S,N(S))|\leq |E(G)| \leq  2 |V(G)|$ by outerplanarity. 
Note that this is well-defined since we established $|B_S(G)\cup D_S(G)| \leq 39 |S|$,
Consequently, if a counterexample exists, then there exists one satisfying all of the above assumptions. More formally, we select a counterexample that is minimal for
\begin{align}
    (|S|,39|S|-|B_S(G) \cup D_S(G)|,|E(G)|)\tag{$\ddagger$}\label{eq:ddagger}
\end{align}
in the lexicographic order. Since all the elements in the triple are non-negative integers and their minimum is bounded below by zero, this is well-defined. (We remark that the parts indicated in gray were added to ensure the entries are non-negative; minimizing $39|S|-|B_S(G) \cup D_S(G)|$ comes down to maximizing $|B_S(G) \cup D_S(G)|$.)

While this approach is not entirely intuitive, the assumptions will prove to be extremely useful for simplifying the structure of $G$. For example, we can show that in a smallest counterexample that minimizes $\eqref{eq:ddagger}$,  $S$ is a stable set (Claim \ref{cl:Sstable}) and no two vertices in $S$ have a common neighbor (Claim \ref{cl:Snocommon}). In general, the Claims \ref{cl:DS} to \ref{cl:v1neqw1} show that such a minimal counterexample $G$ has strong structural properties, by showing that otherwise we could delete some vertices and edges, or contract edges, and find a smaller counterexample. 

Informally, for any vertex from $S$ that we remove from the graph, we may decrease $|B_S(G)\cup D_S(G)|$ by 4 while maintaining $|S|<\frac14(|B_S(G)\cup D_S(G)|)$. It is therefore natural to consider what happens when we reduce $|S|$ by one by contracting a connected subset containing two or more vertices from $S$. The result is again an outerplanar graph and we aim to show it is a smaller counterexample (unless the graph has some nice structure). Contracting an edge $uv$ can affect the degrees of the remaining vertices in the graph $G$.
Therefore $B_S(G)$ may `lose' additional vertices besides $u$ and $v$ if more than one of its neighbors are contracted and $D_S(G)$ may `lose' additional vertices if a neighbor got contracted, increasing the degree. 
We remark that vertices from $D_S(G)$ have no neighbors in $B_S(G)$, and therefore removing or contracting them does not affect the set $B_S(G)$. 

Since we minimize $|S|$ our counterexample $(G,S)$ is connected. We fix an outerplanar embedding of $G$. By definition, vertices in $D$ can have no neighbors in $B$. In fact, the following stronger claim holds.
\begin{claim}\label{cl:DS}
Every vertex $d \in D$ satisfies $\deg(d)=1$.
\end{claim}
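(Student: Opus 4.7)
The plan is to contradict the minimality of $(G,S)$ by means of a single edge deletion. The lower bound $\deg(d)\geq 1$ is automatic: since $S$ is a dominating set of $G$ and $d\notin S$, the vertex $d$ must have some neighbor $d_1\in N(d)\cap S$. So the real content is showing $\deg(d)\leq 1$. Suppose for contradiction that some $d\in D$ has $\deg(d)\geq 2$, and pick a second neighbor $d_2\in N(d)\setminus\{d_1\}$. I would form $G':=G\setminus\{dd_2\}$, which is outerplanar since edge deletion preserves outerplanarity, and my goal is to show that $(G',S)$ is still a counterexample strictly smaller than $(G,S)$ in the order \eqref{eq:ddagger}.

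First I would verify that $S$ remains a dominating set of $G'$. The vertex $d$ still has $d_1\in S$ as a neighbor. If $d_2\notin S$, then in $G$ the vertex $d_2$ had some dominator $s\in S\cap N_G[d_2]$, and necessarily $s\neq d$ because $d\notin S$, so $s$ continues to dominate $d_2$ in $G'$. Every other vertex keeps the same closed neighborhood, so dominance is preserved.

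The key step is to check that $B_S(G')=B_S(G)$ and $D_S(G')=D_S(G)$, so that $|B_S(G')\cup D_S(G')|=|B_S(G)\cup D_S(G)|$. Since $d\in V^*(G)$, both $d$ and $d_2$ have degree at most $3$ in $G$, so neither of them lies in $V_{4^+}(G)$. As edge deletion only decreases the degrees of $d$ and $d_2$, no vertex crosses the threshold $4$ in either direction, giving $V_{4^+}(G')=V_{4^+}(G)$ and hence $B_S(G')=B_S(G)$. Moreover, because the deleted edge $dd_2$ has no endpoint in $V_{4^+}(G)$, the closed neighborhood $N_{G'}[V_{4^+}(G')]$ coincides with $N_G[V_{4^+}(G)]$, so $V^*(G')=V^*(G)$ and thus $D_S(G')=D_S(G)$.

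Putting these together, $(G',S)$ still satisfies $|S|<\frac14(|B_S(G')|+|D_S(G')|)$, so it is a counterexample; its first two coordinates in \eqref{eq:ddagger} agree with those of $(G,S)$, while $|E(G')|=|E(G)|-1<|E(G)|$, contradicting the lex-minimality of $(G,S)$. The only delicate step I anticipate is the bookkeeping around $V_{4^+}$ and $V^*$ in $G'$, but this is completely painless here because both endpoints of the removed edge already have degree at most $3$, so degrees never cross the critical threshold and vertices of $V_{4^+}$ keep the same neighborhoods.
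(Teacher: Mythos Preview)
Your proof is correct and follows the same approach as the paper: delete one edge incident to $d$ while keeping an $S$-neighbor of $d$, verify that $S$ still dominates and that $B_S$ and $D_S$ are unchanged (since both endpoints of the deleted edge have degree at most~$3$), and contradict the lexicographic minimality via the third coordinate of~\eqref{eq:ddagger}. Your bookkeeping is in fact slightly more careful than the paper's, as you explicitly handle both possibilities $d_2\in S$ and $d_2\notin S$ when checking domination.
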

\begin{proof}
Let $e=dv \in E(G)$ be such that $d \in D$. Suppose $d$ has a neighbour in $S$ that is not $v$. We consider the graph $G \setminus e$. Since $v \not \in S$, $S$ is a dominating set of $G \setminus e$. We find $|B_S(G \setminus e)|=|B_S(G)|$, since a vertex in $D$ has no neighbor in $B$. Similarly, $|D_S(G \setminus e)|=|D_S(G)|$.
Hence we also find that $|S|< \frac14(|B_S(G \setminus e)|+|D_S(G \setminus e)|)$. It follows that $(G \setminus e, S)$ is a counterexample to Lemma \ref{lem:outernew}. Hence since $|E(G \setminus e)| < |E(G)|$, the pair $(G \setminus e,S)$ is smaller with respect to \ref{eq:ddagger}, contradicting our choice of $(G, S)$.
%Recall that our counterexample $(G,S)$,  amongst all counterexamples $(F,M)$, 
%minimizes $|M|$; maximizes $|B_M(F) \cup D_M(F)|$; minimizes $|E(B_M(G))|$; minimizes $|V(F)|$; maximizes $|E(M,N(F))|$; and finally minimizes $|E(F)|$.
%Hence the pair $(G \setminus e, S)$ is a smaller counterexample than $(G, S)$, contradicting our choice of $(G, S)$.
\cqed
\end{proof}

\begin{claim}\label{cl:Sstable}
The set $S$ is a stable set.
\end{claim}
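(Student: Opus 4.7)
My plan is to argue by contradiction: suppose there exist $s_1, s_2 \in S$ with $e = s_1 s_2 \in E(G)$, and then show that the pair $(G \setminus e, S)$ is still a counterexample to Lemma~\ref{lem:outernew} but is strictly smaller in the lexicographic order \eqref{eq:ddagger}, contradicting the choice of $(G, S)$. The overall strategy mirrors the proof of Claim~\ref{cl:DS}: delete an edge, verify that no essential property is broken, and exploit that $|E(G)|$ strictly drops while $|S|$ and $|B_S \cup D_S|$ do not move in the wrong direction.

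First I would verify that $G \setminus e$ is outerplanar (immediate, since edge deletion preserves outerplanarity) and that $S$ still dominates it (no vertex outside $S$ lost any neighbor, because the only edge removed has both endpoints in $S$). Next I would track how $B_S$ and $D_S$ change. Since $e$ is incident only to $s_1$ and $s_2$, the degrees of all other vertices are unchanged, so $V_{4^+}(G \setminus e)$ can differ from $V_{4^+}(G)$ only at $s_1, s_2$. Because $s_1, s_2 \in S$ and $B_S = V_{4^+} \setminus S$ by definition, we conclude $B_S(G \setminus e) = B_S(G)$. For $D_S$ I would use that $V^*(\cdot) = V \setminus N[V_{4^+}(\cdot)]$ is antitone in $V_{4^+}$: from $V_{4^+}(G \setminus e) \subseteq V_{4^+}(G)$ we obtain $N[V_{4^+}(G \setminus e)] \subseteq N[V_{4^+}(G)]$ and hence $V^*(G \setminus e) \supseteq V^*(G)$, so $D_S(G \setminus e) \supseteq D_S(G)$.

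Combining these bounds, $|B_S(G \setminus e)| + |D_S(G \setminus e)| \geq |B_S(G)| + |D_S(G)| > 4|S|$, so $(G \setminus e, S)$ is indeed still a counterexample. Comparing tuples: the first coordinate $|S|$ is unchanged; the second coordinate $39|S| - |B_S \cup D_S|$ is weakly smaller (since $|B_S \cup D_S| = |B_S| + |D_S|$ can only grow); and the third coordinate $|E|$ strictly drops by one. Hence $(G \setminus e, S)$ is strictly smaller than $(G,S)$ with respect to \eqref{eq:ddagger}, a contradiction. I do not expect a real obstacle here; the only subtle observation is that the monotonicity of $V^*$ under shrinking $V_{4^+}$ is exactly what is needed, and that this monotonicity has no harmful side effect on $B_S$ precisely because both endpoints of $e$ lie in $S$ and therefore never contribute to $B_S$ in either graph.
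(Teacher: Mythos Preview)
Your proposal is correct and follows essentially the same approach as the paper: delete the edge $e=s_1s_2$, check that $S$ still dominates, that $B_S$ is unchanged (since only the degrees of $s_1,s_2\in S$ move), and that $D_S$ does not shrink, so $(G\setminus e,S)$ is a strictly smaller counterexample via the third coordinate of~\eqref{eq:ddagger}. Your treatment of $D_S$ via the antitonicity of $V^*$ in $V_{4^+}$ is in fact slightly more careful than the paper's one-line ``removing edges does not affect $D$'': you correctly allow for the possibility that $D_S$ grows (e.g.\ if some $s_i$ drops from degree~$4$ to~$3$), which is harmless for the lexicographic comparison.
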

\begin{proof}
    Assume towards a contradiction that there are two vertices $u$ and $w$ in $S$ that are adjacent. We consider the graph $G \setminus uw$ in which $S$ is still a dominating set. Since $u,w \notin B_S(G)$, $|B_S(G \setminus e)|=|B_S(G)|$. Removing edges does not affect $D$, hence $|D_S(G \setminus e)|=|D_S(G)|$.  Hence the pair $(G \setminus e,S)$ is smaller with respect to \ref{eq:ddagger}, contradicting our choice of $(G, S)$. \cqed
\end{proof}
We are now ready to make more refined observations about the structure of $(G,S)$. When considering a pair $(G',S')$ that is smaller than $(G,S)$ with respect to \ref{eq:ddagger} with $V(G')\subseteq V(G)$, it can be useful to refer informally to vertices that belong to $B_S(G)$ but not to $B_{S'}(G')$ as \emph{lost} vertices (similarly for $D_S(G)$ and $D_{S'}(G')$). The number of lost vertices is an upper bound on $|B_S(G)\cup D_S(G)|-|B_{S'}(G') \cup D_{S'}(G')|$.

We need the following notation. Let $P \subseteq E(G)$. We denote the multigraph obtained from $G$ by contracting every edge in $P$ and deleting self-loops by $G / P$.
Note $G / P$ remains outerplanar and may contain parallel edges. 

%We remark that no vertex in $S$ has only neighbors in $D$. Indeed, if some $s \in S$ has only neighbors in $D$ then we remove $s$ and its neighborhood from the graph and we have a smaller counterexample, as we reduced $|S|$ by one and $|D|$ by at most $3$. 
In fact, we will show the following:
\begin{claim}
\label{cl:nbhoodD}
If $d\in D$ and $s\in N(d)\cap S$, then $s$ has a neighbor in $B$.
\end{claim}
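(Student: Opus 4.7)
The plan is to argue by contradiction: suppose $s$ has no neighbor in $B$. Since $d \in V^*(G)$, its neighbor $s$ satisfies $\deg(s) \le 3$, and since by Claim~\ref{cl:Sstable} and the assumption every neighbor of $s$ lies outside $S \cup B$, every neighbor of $s$ has degree $\le 3$. So $s \in V^*(G) \cap S$.

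The key manoeuvre is to consider the swapped set $S' := (S \setminus \{s\}) \cup \{d\}$. Because $V_{4+}(G)$ and $V^*(G)$ are properties of $G$ alone (independent of the dominating set), and both $s$ and $d$ lie in $V^*(G) \setminus V_{4+}(G)$, a direct computation gives $|B_{S'}(G)| = |B|$ and $|D_{S'}(G)| = |D|$. Hence, whenever $S'$ is a dominating set of $G$, $(G,S')$ has the same triple \eqref{eq:ddagger} as $(G,S)$ and is also a lex-minimum counterexample. In that case I would apply Claim~\ref{cl:DS} to $(G,S')$: every vertex in $D_{S'}(G)$ has degree $1$, and since $s \in V^*(G) \setminus S' = D_{S'}(G)$, this forces $\deg(s)=1$. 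Then $\{s,d\}$ is a connected component of $G$, and deleting it yields a counterexample with $|S|$ strictly smaller, contradicting the extremality of $(G,S)$.

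If instead $S'$ fails to be a dominating set, there is some ``orphan'' $u \in N(s) \setminus \{d\}$ whose only $S$-neighbor is $s$. I would split on where $u$ lies. When every orphan of $s$ sits in $D$, Claim~\ref{cl:DS} gives each orphan degree $1$, so the connected component of $s$ is contained in $\{s\} \cup N(s)$ and has at most four vertices, enabling its wholesale deletion exactly as above. When some orphan $u$ lies in $A$, then by the definition of $A$ there is a neighbor $w$ of $u$ with $\deg(w) \ge 4$; since $u$'s only $S$-neighbor is $s$, $w \notin S$, so $w \in B$; by the assumption on $s$ we have $w \notin N(s)$; and since $w$ must be dominated, it has an $S$-neighbor $s^* \neq s$. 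This produces a path $s$--$u$--$w$--$s^*$ whose two endpoints lie in $S$. Contracting this path reduces $|S|$ by $1$, and I would then bound the induced loss in $|B \cup D|$---from the direct removal of $w$, from $B$-vertices having two or more neighbors in the path whose degree drops below $4$, and from $D$-vertices acquiring the new high-degree contracted vertex as a neighbor---by at most $4$, using the $K_{2,3}$-minor-freeness of outerplanar graphs (to control common neighbors of pairs of path-vertices) and Claim~\ref{cl:DS} (to control the $D$-neighbors of each path-vertex); this produces a strictly smaller counterexample.

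The principal obstacle is the final loss bound in the $A$-orphan subcase: nothing proved so far limits how many degree-$1$ vertices of $D$ can be attached to an $S$-vertex such as $s^*$, and controlling this count in $(G,S)$ is precisely where outerplanarity has to be used most delicately.
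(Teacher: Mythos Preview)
Your swap manoeuvre in the first branch is correct: when $S'=(S\setminus\{s\})\cup\{d\}$ dominates, $(G,S')$ shares the extremality triple and Claim~\ref{cl:DS} forces $\deg(s)=1$. But the second branch has two gaps. First, ``every orphan lies in $D$'' does \emph{not} imply that the component of $s$ is $N[s]$: $s$ may also have a non-orphan neighbour $u'\in A$ (one with a second $S$-neighbour) connecting it to the rest of $G$. Concretely, take $N(s)=\{d,u_1,u'\}$ with $u_1\in D$ the sole orphan and $u'$ adjacent to some $s''\in S$; then $S'$ fails to dominate, all orphans sit in $D$, yet the component of $s$ is not $N[s]$. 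This sub-case is repairable (delete $N[s]$ and count: at most two $D$-losses plus at most two $B$-losses through $u'$), but your argument as written is incorrect. Second, the path contraction in the $A$-orphan sub-case does not merely face a delicate bound---it genuinely fails. The contracted vertex typically has degree $\ge 4$ (it inherits at least two outside neighbours from $w$ alone), so every degree-$1$ $D$-pendant at $s$ and at $s^*$ (up to two each) leaves $D$; together with the loss of $w\in B$ itself that is already five, before any further $B$-drops from common neighbours along the path. Outerplanarity bounds common neighbours, not pendants, so it gives no purchase here.

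The paper's proof is structurally different and supplies the idea you are missing. It never swaps or contracts a long path; it deletes $N[s]$ and counts. The one sub-case where that loss reaches $5$ is when both non-$d$ neighbours $w_1,w_2$ of $s$ lie in $A$ and together have four distinct degree-$4$ neighbours in $B$. There the paper \emph{adds} the edge $w_1w_2$ (routed along the path $w_1,s,w_2$, preserving outerplanarity since $d$ and hence $s$ remain on the outer face), which pushes $w_1,w_2$ into $B$ and strictly increases $|B\cup D|$---contradicting maximality in the second coordinate of~\eqref{eq:ddagger}. Your plan never exploits that coordinate; that is the missing mechanism.
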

\begin{proof}
Suppose that $d\in D$ is adjacent to $s\in S$. Since $d\in D$, we find that $s$ has degree at most $3$. Say $s$ has neighbors $w_1$ and $w_2$ which are not in $B$ (possibly equal, but both not equal to $d$).  By Claim \ref{cl:Sstable}, we find $w_1,w_2\not\in S$, and so $w_1,w_2$ have degree at most $3$. If $w_1 \in D$, then $deg(w_1)=1$ by Claim \ref{cl:DS}. If $w_2 \in D$, then we remove $s$ and its neighbours from the graph and found a smaller counterexample as we reduced $|S|$ by one and $|D|$ by at most $3$. If $w_2 \notin D$, it has at most $2$ neighbours in $B$. When we remove $N[s]$ from the graph, $|S|$ goes down by one and $|B \cup D|$ goes down by at most four (`counting' the two outside neighbors, $w_1$ and $d$), contradicting the minimality of our counterexample (in terms of $|S|$).
So we can assume $w_1,w_2\notin D$. Since $w_1,w_2$ are of degree at most $3$, they can each have at most $2$ neighbors in $B$.  

Recalculating now that we know that $w_1,w_2\not\in B\cup D$, if $N(\{w_1, w_2 \})$ contains at most three vertices of degree four in $B$, then $|B \cup D|$ goes down by at most four in $G \setminus N[s]$. This would be a contradiction as $(G, \setminus N[s], S \setminus \{ s \})$  would be a smaller counterexample (in terms of $|S|$), see Figure \ref{fig:4}).
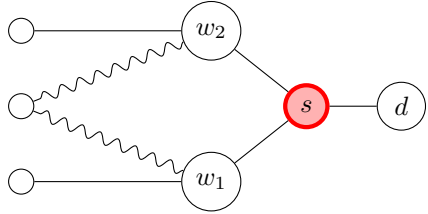
\begin{figure}[!h]
    \centering
    \begin{tikzpicture}[scale=0.5]  
    \def\height{2};
    \def\width{2.5};
    
    \node[snode] (s) at (\width, 0){$s$};
    
    \node[normal node] (w_2) at (0, \height){$w_2$};
    \node[normal node] (w_1) at (0, -\height) {$w_1$};
    \node[normal node] (d) at (2*\width, 0) {$d$};
    \node[normal node] (v_1) at (-2*\width, \height) {};
    \node[normal node] (v_2) at (-2*\width, 0) {};
    \node[normal node] (v_3) at (-2*\width, -\height) {};
    \draw (s) to (d);
    \draw (s) to (w_1);
    \draw (s) to (w_2);
    \draw (w_1) to (v_3);
    \draw (w_2) to (v_1);
    \draw[maybe edge] (w_2) to (v_2);
    \draw[maybe edge] (w_1) to (v_2);
\end{tikzpicture}
    \caption{An illustration of the case when $w_1,w_2$ are not in $B$ and together have three neighbors which are not $s,w_1$ or $w_2$. At least one of the wavy edges is present and at least one of $w_1,w_2$  has degree three in the picture. In particular, $w_1$ and $w_2$ are not adjacent. }
    \label{fig:4}
\end{figure}

Hence, $w_1,w_2$ have exactly four neighbors in $B$, all of which are of degree exactly $4$. In particular, $w_1,w_2$ are not adjacent. 

Let $G'$ be the graph obtained from $G$ by adding the edge $w_1w_2$. Considering the local rotation of the three neighbors of $s$ in an outerplanar embedding of $G$, we note that $w_1,w_2$ are consecutive neighbors of $s$. We can draw the edge $w_1w_2$ close to the path $w_1 \dd s \dd w_2$ keeping the embedding outerplanar (since $d_1$ is on the outer face still, and hence so is $s$).
It follows that $w_1, w_2 \in B_S(G') \setminus B_S(G)$, so $|B_S(G) \cup D_S(G)| < |B_S(G') \cup D_S(G')|$ (since $w_1, w_2$ have no neighbors in $D$). Thus, $(G', S)$ is a smaller counterexample (in terms of $-|B_S(G) \cup D_S(G)|$), a contradiction. 
\cqed
\end{proof}

\begin{claim}\label{cl:Snocommon}
No two vertices in $S$ have a common neighbor.
\end{claim}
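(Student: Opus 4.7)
The plan is to argue by contradiction, leveraging the minimality of $(G,S)$ with respect to \eqref{eq:ddagger}. Suppose $s_1,s_2\in S$ share a common neighbor $v$. By Claim~\ref{cl:Sstable} we have $v\notin S$, and by Claim~\ref{cl:DS} we have $v\notin D_S(G)$, since $v$ has degree at least $2$. I would first attempt the simplest reduction: delete the edge $vs_1$ to obtain $G':=G\setminus vs_1$. Since $v$ is still dominated by $s_2$, the set $S$ is still a dominating set of $G'$; and since $|S|$ is unchanged while $|E(G')|<|E(G)|$, it suffices to prove $|B_S(G')\cup D_S(G')|\ge|B_S(G)\cup D_S(G)|$ to contradict minimality.

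Because edge deletion cannot increase degrees, one easily verifies that $D_S(G)\subseteq D_S(G')$ and that $B_S(G)\setminus B_S(G')\subseteq\{v\}$: the only other vertex whose degree was changed is $s_1\in S$, and $v$ can only leave $B_S$ if $\deg_G(v)=4$. Consequently, the inequality above holds automatically when $v\notin B_S(G)$ or $\deg_G(v)\ge 5$, and these cases can be dismissed immediately.

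The main obstacle is the remaining case $v\in B_S(G)$ with $\deg_G(v)=4$; write $w_1,w_2$ for the two neighbors of $v$ besides $s_1,s_2$. Here the plan is to show that $v$ itself joins $D_S(G')$, which happens exactly when $\deg_G(s_2),\deg_G(w_1),\deg_G(w_2)\le 3$. If some vertex in $\{s_2,w_1,w_2\}$ has degree at least $4$, I would try the symmetric deletion of the edge $vs_2$, which succeeds unless some vertex in $\{s_1,w_1,w_2\}$ has degree at least $4$. If both attempts fail, then either some $w_i$ has degree at least $4$, or both $s_1$ and $s_2$ have degree at least $4$.

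In this residual subcase I would switch operation: either delete $v$ entirely (and argue that vertices previously in $A_S$ whose only high-degree neighbor was $v$ migrate to $D_S(G-v)$, using Claim~\ref{cl:nbhoodD} to control collateral losses), or, mimicking the final step in the proof of Claim~\ref{cl:nbhoodD}, add the edge $w_1w_2$ to push both $w_1,w_2$ into $B_S$; this requires verifying that $w_1,w_2$ are consecutive in the cyclic order around $v$ in our outerplanar embedding so that outerplanarity is preserved. I expect the hardest step to be this last subcase: one must handle several degree patterns among $\{s_1,s_2,w_1,w_2\}$ and confirm that some combination of edge/vertex deletion and outerplanar-preserving edge addition always strictly decreases the tuple~\eqref{eq:ddagger}, while carefully tracking how $B$- and $D$-membership are affected near $v$.
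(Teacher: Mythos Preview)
Your approach differs from the paper's: you keep $|S|$ fixed and try to exploit the third coordinate of \eqref{eq:ddagger} via edge deletion, whereas the paper contracts the path $s_1vs_2$ to a single $S$-vertex, decreasing $|S|$ by one and thereby earning a budget of four for losses from $B\cup D$. Your reduction is pleasantly clean in the easy cases---whenever $v\notin B_S(G)$ or $\deg_G(v)\ge 5$ you are finished in one line---but the residual hard case has a genuine gap, not merely an unfinished case analysis.

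Consider the following local configuration, which is fully consistent with Claims~\ref{cl:DS}--\ref{cl:nbhoodD}: both $s_1,s_2$ have degree at least~$5$, while $w_1$ is adjacent only to $v$ and $s_1$, and $w_2$ is adjacent only to $v$ and $s_2$ (so $\deg(w_1)=\deg(w_2)=2$). Deleting $vs_i$ drops $v$ from $B$, and $v$ cannot enter $D$ since its remaining $S$-neighbour still has high degree. Deleting $v$ entirely again loses $v$ from $B$ with no compensating $A\to D$ migration, because $w_1,w_2$ remain adjacent to the high-degree vertices $s_1,s_2$. Adding the edge $w_1w_2$ only raises their degrees to~$3$, so neither enters $B$, and you have increased $|E|$ without any gain in $|B\cup D|$. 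Claim~\ref{cl:nbhoodD} is vacuous here, since no $D$-vertex is in sight. Hence no combination of the operations you list produces a pair smaller for \eqref{eq:ddagger}. The paper's contraction handles precisely this situation (its final bullet, ``$s_1$ and $s_2$ have no neighbors in $D$''): after contracting $\{s_1v,vs_2\}$ the loss to $|B\cup D|$ is well under four, and the single saved $S$-vertex absorbs it. Your fixed-$|S|$ toolbox simply cannot access that trade-off; to close the argument you will need an operation that decreases $|S|$.
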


\begin{proof}
Assume towards a contradiction that there are two vertices $s_1$ and $s_2$ in $S$ that have a common neighbor $v$. Since $S$ is a stable set (Claim~\ref{cl:Sstable}), we have $v\not\in S$. 

We will consider the outerplanar graph $G'=G/P$ obtained by contracting $P= \{s_1v, vs_2 \}$ into a single vertex $v_P$. Let $S'=S\setminus \{s_1, s_2\}\cup \{v_P\}$. We use the abbreviations $B'=B_{S'}(G')$ and $D'=D_{S'}(G')$. 

We will again do a case analysis, on the union of the neighbors of $s_1$ and  the neighbors of $s_2$ in $D$, to find a smaller counterexample. Note that $v \notin D$ by Claim \ref{cl:DS}.
If $|B'|+|D'|\geq |B|+|D|-4$, then $(G', S')$ is a smaller counterexample (in terms of $|S|$). Note that vertices in $B \setminus B'$ have at least two neighbors in the set $\{s_1,v,s_2\}$.
\begin{itemize}
    \item Suppose first that for some $i\in\{1,2\}$, $s_i$ is adjacent to at least two vertices in $D$. Then $v$ is the only other neighbor of $s_i$, so the graph $G''$ obtained from $G$ by deleting $s_i$ and its two neighbors in $D$, satisfies $|B(G'')\cup D(G'')|\geq |B\cup D|-3$ whereas the set $S''=S \setminus \{s_i\}$ is dominating. This gives a smaller counterexample (in terms of $|S|$).
    \item Suppose that both $s_1,s_2$ are adjacent to a single vertex in $D$. Then both have degree at most $3$. Let $d_1,d_2\in D\setminus\{v\}$ be the neighbors of $s_1,s_2$ respectively (where $d_1,d_2$ might be equal). The graph $G'$ is a smaller counterexample unless we lost two vertices from $B$ besides possibly $v$, that is, $|B'|\leq |B\setminus \{v\}|-2$. Any vertex lost from $B\setminus\{v\}$ must be adjacent to two vertices among $\{s_1,v,s_2\}$ (as otherwise its degree did not change), and since both $s_1$ and $s_2$ already have two named neighbors, $G'$ is a counterexample unless there is, for each $i\in \{1,2\}$, a common neighbor $b_i\in B$ of $s_i$ and $v$, and all named vertices are distinct.  

    Since $d_1\in D$ and $b_1\in B$, we find that $b_1d_1$ is not an edge of $G$. Since $s_1$ has three neighbors, $b_1$ and $d_1$ are consecutive neighbors and the edge $b_1d_1$ can be added without making the graph non-planar.  Consider adding the edge $b_1d_1$ in $G$ along the path $b_1s_1d_1$, such that there are no vertices in between the edge and the path. This may affect whether $s_1$ is on the outer face, but it does not affect whether $s_2$ is on the outer face. Therefore, after contracting $P$ this adjusted graph, the obtained graph $G''$ is still outerplanar. Moreover, $b_1$ has the same degree in $G''$ as in $G$, and so $|B_{S'}(G'')\cup D_{S'}(G'')|\geq |B|+|D|-4$ and  $G''$ is a smaller counterexample (in terms of $|S|$).
     \item Suppose that $s_1$ is adjacent to a vertex $d_1$ in $D$ and $s_2$ is not (the symmetric case is analogous). There can be at most three vertices in $B\setminus\{v\}$ which are adjacent to two vertices in $s_1,v,s_2$ (as only one can be adjacent to $s_1$ and $v,s_2$ have at most two common neighbors since the graph is outerplanar). The only way in which $G'$ is not a counterexample, is when there is a common neighbor $b_1$ of $s_1$ and $v$ and two common neighbors $b_2,b_3$ of $s_2$ and $v$ with all named vertices distinct. As before, we may now add the edge $b_1d_1$ in order to obtain a smaller counterexample $G''$: we add the edge $b_1d_1$ and then contract $P$, getting a smaller counterexample (in terms of $|S|$).
    \item Finally, suppose that $s_1$ and $s_2$ have no neighbors in $D$. 
    By outerplanarity, there are at most four vertices with two neighbors among $\{s_1,v,s_2\}$. Hence $G'$ is a counterexample unless there are exactly four (the only vertices `lost' from $B\cup D$ are either $v$ or among such common neighbors, since $s_1$ and $s_2$ have no neighbors in $D$). All four vertices are adjacent to $v$, because otherwise $G$ contains a $K_{2,3}$-minor\footnote{The vertices $s_1,s_2$ can have at most one further common neighbor $v^*$ besides $v$. If $v^*$ exists, we contract it with $s_1$ and $s_2$. We find a $K_{2,3}$ subgraph with $v,v^*$ on one side and the three other common neighbors on the other side.}, a contradiction. In particular, $G'$ is a counterexample unless there are two common neighbors of $v$ and $s_1$ and two common neighbors of $v$ and $s_2$ (and so $d(v)\geq 6$ and $v\in B$).
    
    Fix a clockwise order $w_1,w_2,\ldots, w_d$ on the neighbors of $v$ such that the path $w_1vw_d$ belongs to the boundary of the outer face. Let $i\neq j$ such that $w_i=s_1$ and $w_j=s_2$. After relabelling, we may assume $i<j$. Since $s_1$ and $s_2$ both have two common neighbors with $v$, we find $i>1$, $j<d$ and $i+1<j-1$.
    The vertices adjacent to multiple vertices in $\{s_1,v,s_2\}$ are $w_{i-1},w_{i+1},w_{j-1}$ and $w_{j+1}$.
    We create a new graph $G''$ by replacing $v$ with two adjacent vertices $v_1$ and $v_2$, where $v_1$ is adjacent to $w_1,w_2,\ldots, w_{i+1}$ and $v_2$ to $w_{i+2},\ldots, w_d$.
    This graph is outerplanar because both $v_1$ and $v_2$ have an edge incident to the outer face. Moreover, $d(v_1)$ and $d(v_2)$ are both at least $4$, since they are adjacent to each other, to either $s_1$ or $s_2$ and to at least two vertices among $w_1,\dots,w_d$. The set $S$ is still a dominating set, but $|B(G'')\cup D(G'')|>|B\cup D|$ so this is a smaller counterexample (since we choose our counterexample to minimize $|S|$ and with respect to that maximize $|B_S(G) \cup D_S(G)|$). 
    \end{itemize}
In all cases, we found a smaller counterexample. This contradiction proves the claim. \cqed
\end{proof}
With the claims above in hand, we now analyze the structure of $H=H_G(S)$ as described in the notation section more closely.
Note that the for each $u\in V(G)\setminus S$ the vertex $s(u)$ is uniquely defined by Claim~\ref{cl:Snocommon}.

Recall that $H$ is outerplanar. It follows that there is a vertex $s_1 \in V(H)$ with at most 2 distinct neighbors in $H$. 

We start with an easy observation.

\begin{observation}\label{obs:B}
Let $b \in B$ and $s(b)$ be its unique neighbor in $S$. Then there exists $w \in N(b) \setminus \{s(b)\}$, such that its  unique neighbor $s(w) \in S$ is not equal to $s(b)$. 
\end{observation}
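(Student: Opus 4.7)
The plan is a short proof by contradiction: assume that every $w \in N(b)\setminus\{s(b)\}$ satisfies $s(w) = s(b)$, and derive a forbidden $K_{2,3}$-minor.

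First I would check that the setup is well-defined. Since $b \in B$ we have $b \notin S$ and $\deg(b) \geq 4$, and because $S$ dominates $G$, $b$ has at least one neighbor in $S$; by Claim~\ref{cl:Snocommon} this neighbor is unique, so $s(b)$ makes sense. Moreover, any $w \in N(b)\setminus\{s(b)\}$ cannot lie in $S$ either: if it did, both $w$ and $s(b)$ would be neighbors of $b$ inside $S$, again violating Claim~\ref{cl:Snocommon}. Hence each such $w$ is outside $S$ and has a well-defined unique neighbor $s(w) \in S$, so talking about $s(w)$ is legitimate.

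Next I would do the counting. Using $\deg(b) \geq 4$, pick three distinct vertices $w_1, w_2, w_3 \in N(b)\setminus\{s(b)\}$. The contradictory assumption $s(w_i) = s(b)$ means precisely that $w_i \in N(s(b))$. Therefore $b$ and $s(b)$ share three common neighbors $w_1, w_2, w_3$, and the six edges $bw_i$ and $s(b)w_i$ for $i=1,2,3$ form a copy of $K_{2,3}$ as a subgraph of $G$, with parts $\{b, s(b)\}$ and $\{w_1, w_2, w_3\}$ (the extra edge $b\,s(b)$ is harmless and can simply be ignored). Since any subgraph is a minor, this exhibits a $K_{2,3}$-minor in $G$, contradicting the $K_{2,3}$-minor-freeness of outerplanar graphs recalled in the ``Properties of outerplanar graphs'' subsection. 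Therefore some $w \in N(b)\setminus\{s(b)\}$ must satisfy $s(w) \neq s(b)$, as required.

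I do not expect any real obstacle here: the only subtlety is using Claim~\ref{cl:Snocommon} twice, once to justify the notation $s(b)$ and once to force the $w_i$'s to have their own $s(w_i)$'s, after which the observation follows immediately from the degree bound and $K_{2,3}$-minor-freeness.
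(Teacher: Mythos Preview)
Your proof is correct and follows the same argument as the paper: since $\deg(b)\geq 4$, the vertex $b$ has at least three neighbors besides $s(b)$, and if all of them were dominated by $s(b)$ one would obtain a $K_{2,3}$ on $\{b,s(b)\}$ versus three common neighbors, contradicting outerplanarity. The paper states this in one sentence; you simply add the (helpful) verification via Claim~\ref{cl:Snocommon} that $s(b)$ and the $s(w_i)$ are well-defined.
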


Indeed, the vertex $b$ can have at most two common neighbors with $s(b)$ (otherwise there would be a $K_{2,3}$, contradicting outerplanarity), and a vertex in $B$ has degree at least $4$ by definition.

Note that the vertex $s_1$ has at least one neighbor in $H$. Indeed, if $s_1$ has no neighbor in $H$, then $N[s_1]$ is a connected component in $G$. Since $G$ is connected, $G=N[s_1]$. By Observation~\ref{obs:B}, we have $B=\emptyset$, so $|D\cup B|\leq 3$. 

\begin{claim}\label{cl:s1degree2}
The vertex $s_1$ has precisely two neighbors in $H$.
\end{claim}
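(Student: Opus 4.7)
Assume for contradiction that $s_1$ has exactly one distinct neighbor $s_2$ in $H$. The plan is to construct a smaller counterexample with respect to the lex order $\eqref{eq:ddagger}$, in the style of Claim \ref{cl:Snocommon}. By Claim \ref{cl:Sstable} and Claim \ref{cl:Snocommon}, $s_1$ and $s_2$ are not adjacent and share no common neighbor, so the $H$-edge $s_1 s_2$ is witnessed by some $uv \in E(G)$ with $u \in N(s_1)$, $v \in N(s_2)$, and $u \neq v$ (otherwise $u=v$ would be a common neighbor). A key consequence of the hypothesis is that any vertex in $V(G) \setminus V_1$ adjacent to some vertex in $V_1 := N[s_1]$ must lie in $V_2 := N[s_2]$: such an adjacency produces an $H$-edge from $s_1$, which by hypothesis goes to $s_2$.

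The main construction is to reduce $|S|$ by one by contracting the path $s_1, u, v, s_2$ to a single vertex $v_P$, forming $G' := G/\{s_1u, uv, vs_2\}$ and $S' := (S \setminus \{s_1,s_2\}) \cup \{v_P\}$. Then $G'$ is outerplanar, $S'$ dominates $G'$, and $|S'| = |S|-1$. Since $|B_S(G)\cup D_S(G)| \geq 4|S|+1$, it suffices to lose at most four vertices from $B \cup D$ to make $(G',S')$ a strictly smaller counterexample. The lost vertices are (i) possibly $u$ or $v$ themselves (not in $D$ by Claim \ref{cl:DS}, since both have degree at least $2$), and (ii) vertices outside the contracted set whose degree drops below $4$, which forces them to have at least two neighbors in $\{s_1, u, v, s_2\}$. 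By $K_{2,3}$-minor-freeness, each pair in $\{s_1,u,v,s_2\}$ has at most two common neighbors in $V(G) \setminus \{s_1,u,v,s_2\}$, and by Claim \ref{cl:Snocommon} the pair $\{s_1,s_2\}$ has none, which bounds the loss in generic configurations.

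The main obstacle is the extremal sub-cases where this bound is still exceeded; I would handle these exactly as in Claim \ref{cl:Snocommon}. If $s_1$ has at least two $D$-neighbors, each is a leaf by Claim \ref{cl:DS}, forcing $\deg(s_1) \leq 3$, and I would instead delete $V_1$ from $G$ and $s_1$ from $S$; here the only non-$D$ neighbor $u_0$ of $s_1$ (if any) is the sole bridge to $V_2$ and so limits the external degree drops. If $s_1$ has at most one $D$-neighbor, Claim \ref{cl:nbhoodD} combined with a bound on $\deg(s_1)$ controls the size of $V_1 \cap (B \cup D)$; if the path-contraction construction would still lose too many $B$-vertices, I would first add a chord $bd$ (with $b \in N(s_1) \cap B$ and $d \in N(s_1) \cap D$) along the outer-face path $b s_1 d$ before contracting, restoring a vertex to $B$ in $G'$. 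The trickiest step will be verifying outerplanarity of such preparatory edges; the assumption that $s_2$ is the only distinct $H$-neighbor of $s_1$ makes the region $V_1$ ``peninsular'' in the outerplanar embedding (all external ties going through $V_2$), giving enough room near $s_1$ to route the added chords without creating a $K_4$- or $K_{2,3}$-minor.
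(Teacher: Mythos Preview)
Your contraction of the four-vertex path $s_1,u,v,s_2$ introduces a difficulty that you never resolve: losses on the $s_2$-side. Because $v$ and $s_2$ both lie in the contracted set $P$, every vertex of $N(s_2)$ that is also adjacent to $v$ drops a degree; by $K_{2,3}$-freeness there can be two such vertices, and there is no reason they should coincide with the ``bridge'' vertices $u_1,u_2$ (the vertices of $N[s_2]$ that actually see $N[s_1]$). Together with the losses of $u$ and $v$ themselves, and with a possible common neighbor of $s_1$ and $u$ on the other side (your $v_2$), this already allows more than four vertices to leave $B$. You also omit a third loss category entirely: a $D$-leaf attached to $s_2$ becomes adjacent to $v_P$, which will typically have degree at least four, so it leaves $D$. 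All of your proposed repairs --- deleting $V_1$ when $s_1$ has two $D$-leaves, adding a chord $bd$ near $s_1$ --- operate exclusively on the $s_1$-side and do nothing about either of these $s_2$-side effects. The sentence ``bounds the loss in generic configurations'' is not a bound: five surviving pairs with up to two common neighbors each is far from four.

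The paper sidesteps all of this by deleting $N[s_1]$ rather than contracting anything that touches $s_2$. This is the point of choosing $s_1$ to have small $H$-degree: the operation is asymmetric, and the only vertices outside $N[s_1]$ whose degree can drop are the (at most two) vertices of $N[s_2]$ that had a neighbor in $N[s_1]$. Combined with $N(s_1)\cap B\subseteq\{v_1,v_2\}$ from Observation~\ref{obs:B}, this gives an immediate bound of four lost vertices in the case $N(s_1)\cap D=\emptyset$, and a short case split on $|N(s_1)\cap D|\in\{1,2\}$ finishes the argument. Your fallback for the ``two $D$-neighbors'' sub-case is already this deletion argument; the gap is that your main construction does not work for the remaining sub-cases.
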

\begin{proof}
Assume towards a contradiction that $s_1$ has a single neighbor $s_2$ in $H$. Let $v_1,\ldots, v_k$ be the vertices in $N[s_1]$ that have a neighbor in $N[s_2]$, and conversely let $u_1,\dots,u_\ell$ be the vertices in $N[s_2]$ that have a neighbor in $N[s_1]$. Note that by Claims~\ref{cl:Sstable} and~\ref{cl:Snocommon}, all of $\{v_1,\ldots,v_k,u_1,\ldots,u_\ell,s_1,s_2\}$ are pairwise distinct. 
If $\ell \geq 3$, then contracting the connected set $N[s_1]$ in $G$ gives a $K_{2,3}$ on the contracted vertex and $s_2$ on one side and $u_1,u_2,u_3$ on the other.
We derive that $\ell \leq 2$, and by symmetry, $k \leq 2$. By Observation~\ref{obs:B}, the only neighbors of $s_1$ that belong to $B$ are in $\{v_1,v_2\}$. As we assumed that $s_1$ has degree 1 in $H$, we have $N[v_i]\subseteq N[s_1] \cup \{u_1,u_2\}$ for $i \in \{1,2\}$. We will do a case distinction on $N[s_1]\cap D$.
\begin{itemize}
    \item If $s_1$ has no neighbor in $D$, we delete $N[s_1]$, and note that $D_S(G)=D_{S \setminus s_1}(G\setminus N[s_1])$, while $B_S(G) \setminus \{v_1,v_2,u_1,u_2\} \subseteq B_{S \setminus s_1}(G\setminus N[s_1])$. Therefore, 
    \[
    |S \setminus \{s_1\}|\geq \frac14 \cdot (|D_{S \setminus s_1}(G\setminus N[s_1])|+|B_{S \setminus s_1}(G\setminus N[s_1])|).
    \]
    So we have found a smaller counterexample (in terms of $|S|$).
    \item Suppose $s_1$ has two neighbors $d_1\neq  d_2$ in $D$. Then $v_2$ does not exist since $d(s_1) \leq 3$ and because $v_1,v_2$ are distinct from $d_1,d_2$ (vertices in $D$ have degree 1 by Claim \ref{cl:DS}). Suppose first that $v_1$ has degree at least $4$. Let $x$ be its neighbor distinct from $u_1,u_2,s_1$. By assumption on $s_1$, the vertex $x$ has no neighbor in $S \setminus \{s_1,s_2\}$. Therefore, $x$ is adjacent to $s_1$. However, $x$ is distinct from $d_1$, $d_2$ and $v_1$, which contradicts $d(s_1) \leq 3$. This case is illustrated in Figure \ref{fig:s3_does_not_exist_s1_has_2_D_nbrs}. Hence $v_1 \notin B$ and removing $N[s_1]$ now gives a smaller counterexample, a contradiction (in terms of $|S|$).
    %\linda{it's a counterexample because the only way we lose more than four vertices is if $v_1 \in D$ and $u_1, u_2 \in B$ which can't happen since $B$ is anticomplete to $D$.}
        \begin{figure}[!h]
    \centering
    \begin{tikzpicture}[scale=0.5]  
    \def\height{2};
    \def\width{2.5};
    
    \node[snode] (s_1) at (\width, 0){$s_1$};
    \node[snode] (s_2) at (4*\width, 0) {$s_2$};
    
    \node[normal node] (d_1) at (0, \height){$d_1$};
    \node[normal node] (d_2) at (0, -\height) {$d_2$};
    \node[normal node] (v_1) at (2*\width, 0) {$v_1$};
    \node[special node] (x) at (2*\width, -1*\height) {$x$};
    \node[normal node] (u_1) at (3*\width, \height) {$u_1$};
    \node[maybe node] (u_2) at (3*\width, -\height) {$u_2$};
    
    \draw (u_1) to (s_2);
    \draw (u_2) to (s_2);
    \draw (v_1) to (u_1);
    \draw (v_1) to (u_2);
    \draw (s_1) to (v_1);
    \draw[maybe edge, thin] (s_1) to (x);
    \draw[maybe edge, thin] (v_1) to (x);
    \draw (s_1) to (d_1);
    \draw (s_1) to (d_2);
    
     \path
    (u_1) edge node[above,rotate= 90] {$=$?} (u_2) [dotted];
\end{tikzpicture}
    \caption{An illustration of the case where $s_1$ has degree one in $H$ and two neighbors $d_1, d_2 \in D$ in $G$. If $v_1 \in B$, then some vertex $x$ exists such that both wavy edges are present in $G$, a contradiction.}
    \label{fig:s3_does_not_exist_s1_has_2_D_nbrs}
\end{figure}
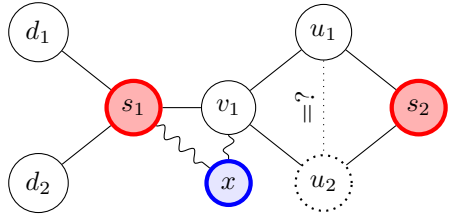
    \item Suppose that $s_1$ has a single neighbor $d_1$ in $D$. $(G \setminus N[s_1], S \setminus s_1)$ is smaller than $(G, S)$ with respect to \ref{eq:ddagger}. $(G \setminus N[s_1], S \setminus s_1)$ cannot be a counterexample. Hence all of $u_1,u_2, v_1,v_2$ exist and belong to $B$. In particular, $v_1,v_2$ both have degree at least 4. Each of $v_1$ and $v_2$ can only have neighbors within $\{s_1,v_1,v_2,u_1,u_2\}$ because a neighbor $x$ not within $\{s_1,v_1,v_2,u_1,u_2\}$ is a neighbor of $s_1$, but $d(s_1)\leq 3$. Therefore, both $v_1$ and $v_2$ are adjacent to $u_1$ and $u_2$. Together with $s_1$, this forms a $K_{2,3}$ subgraph  (see Figure \ref{fig:s3_does_not_exist_s1_has_1_D_nbr}): a contradiction. 
    \begin{figure}[!h]
        \centering
        \begin{tikzpicture}[scale=0.5]  
    \def\height{2};
    \def\width{2.5};
    
    \node[snode] (s_1) at (\width, 0){$s_1$};
    \node[snode] (s_2) at (4*\width, 0) {$s_2$};
    
    \node[normal node] (d_1) at (0, 0){$d_1$};
    \node[normal node] (v_1) at (2*\width, \height) {$v_1$};
    \node[normal node] (v_2) at (2*\width, -\height){$v_2$};
    \node[normal node] (u_1) at (3*\width, \height) {$u_1$};
    \node[normal node] (u_2) at (3*\width, -\height) {$u_2$};
    
    \draw (u_1) to (s_2);
    \draw (u_2) to (s_2);
    \draw (v_1) to (u_1);
    \draw (v_1) to (u_2);
    \draw (s_1) to (v_1);
    \draw (s_1) to (d_1);
    \draw (u_1) to (v_2);
    \draw (u_2) to (v_2);
    \draw (s_1) to (v_2);
\end{tikzpicture}
        \caption{An illustration of the case where $s_1$ has degree one in $H$ and has exactly one neighbor in $D$ in $G$. We reduce to the case in which the depicted graph is a subgraph of $G$. We find a contradiction since the depicted graph contains a $K_{2,3}$.}
        \label{fig:s3_does_not_exist_s1_has_1_D_nbr}. 
    \end{figure}
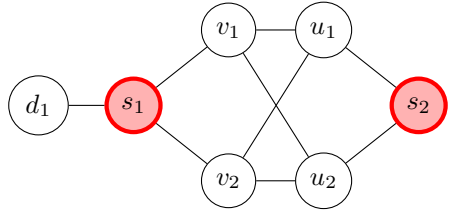%
\end{itemize}% 
\cqed
\end{proof}
So $s_1$ has two neighbors in $H$. Let $s_2, s_3 \in V(H)$ be its neighbors. We again denote $v_1,v_2$ and $u_1,u_2$ for the potential neighbors of $s_1$ and $s_2$ in $G$ corresponding to the edge $s_1s_2$ in $H$. Similarly, let $w_1,\ldots, w_p$ be the vertices in $N[s_1]$ that have a neighbor in $N[s_3]$ (within $G$), and conversely let $x_1,\dots,x_q$ be the vertices in $N[s_3]$ that have a neighbor in $N[s_1]$. By the same argument as before for $s_1$ and $s_2$, we obtain $p \leq 2$ and $q \leq 2$ and that all of $\{w_1,w_2,x_1,x_2,s_1,s_3\}$ are pairwise distinct. However, there may be a vertex in $\{w_1,w_2\}\cap \{v_1,v_2\}$; there may not be two such vertices since this would lead to a $K_{2,3}$-minor (with vertices $\{v_1,w_1\}$ and $\{v_2,w_2\}$ in one part, and $s_1,\{s_2,u_1,u_2\}, \{s_3,x_1,x_2\}$ in the other). 

Our general approach is to delete $N[s_1]$ and add edges between $\{u_1,u_2\}$ and $\{x_1,x_2\}$ as appropriate so as to mitigate the impact on $|B \cup D|$. If this does not work, we obtain further structure on the graph which we exploit to create a different smaller counterexample. We will repeatedly apply the following Observation~\ref{obs:flipping}. Sometimes when deleting vertices and edges from the graph $G$, the result is a disconnected graph, so we can perform the "flipping" operation described below, and connect the different components to get a smaller counterexample $(G',S')$.
\begin{observation}[Flipping]
\label{obs:flipping}
Let $G$ be the disjoint union of two outerplanar graphs $O_1$ and $O_2$.
Consider an outerplanar embedding of $G$, and let $(u_1,u_2,\ldots,u_q)$ denote the outer face of $G[O_2]$ in clockwise order. We can obtain a different outerplanar embedding of $G$ by reversing the order of $O_2$ without modifying the embedding of $O_1$, so that the outer face of $G[O_2]$ is $(u_q,\ldots,u_2,u_1)$ in clockwise order.
\end{observation}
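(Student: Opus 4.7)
The plan is to reflect the embedding of $O_2$ geometrically while leaving the embedding of $O_1$ untouched, and verify that the result is still an outerplanar embedding whose outer face around $O_2$ is traversed in the reverse direction. First, I would fix an outerplanar embedding of $G$ and, using the fact that $O_1$ and $O_2$ lie in disjoint connected components, choose a closed topological disk $D$ in the plane whose interior contains the entire embedding of $G[O_2]$ and is disjoint from the embedding of $G[O_1]$.

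Next, I would replace the embedding of $O_2$ inside $D$ by its mirror image under a reflection of $D$ (an orientation-reversing homeomorphism of $D$), leaving the exterior of $D$ unchanged. Since a reflection is a homeomorphism, no edge crossings are created inside $D$, and since nothing outside $D$ is altered, no crossings are created between $O_2$ and $O_1$. The result is therefore a planar embedding of $G$.

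Finally, I would verify outerplanarity and the orientation. The vertices of $O_1$ remain on the outer face of $G$ because their embedding is unchanged. The vertices $u_1, \ldots, u_q$ lay on the boundary of the outer face of $G[O_2]$ inside $D$; after reflection, this face is mapped to its mirror image, which still has these vertices on its boundary and still meets the exterior of $D$, so they remain on the outer face of the new embedding of $G$. Since a reflection swaps clockwise and counterclockwise, the boundary of the outer face of $G[O_2]$ is now traversed clockwise as $(u_q, \ldots, u_1)$, as required. The main point requiring care is confirming that this orientation reversal really takes place globally after gluing the reflected interior to the untouched exterior of $D$, but it is immediate from the fact that a reflection of the disk reverses the cyclic order of any face whose boundary lies in its interior.
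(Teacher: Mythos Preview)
Your argument is correct and is precisely the natural formalization of the ``flipping'' the paper has in mind. Note, however, that the paper does not actually give a proof of this observation: it is stated as a remark-style observation and accompanied only by an illustrative figure (Figure~\ref{fig:flip}), with the justification left implicit. Your reflection-in-a-disk argument is the standard way to make this rigorous, and the only subtlety you flag---that the reflected outer boundary of $O_2$ still meets the unbounded face after gluing---is exactly the point one would want to check. So your proposal is consistent with, and fleshes out, what the paper leaves as self-evident.
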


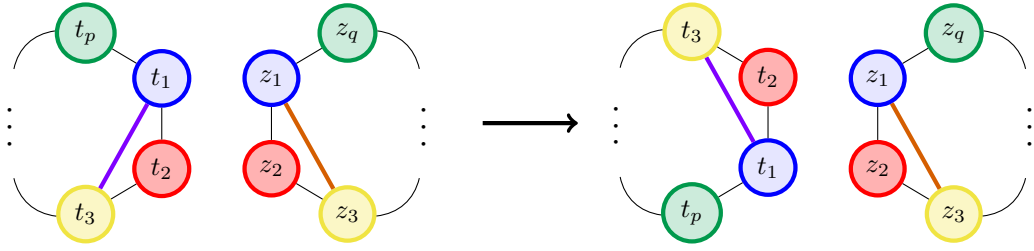
\begin{figure}[]
    \centering
    \begin{tikzpicture}[scale=0.4]
\tikzstyle{one} = [special node];
\tikzstyle{three} = [shape= circle, draw=gelb, ultra thick, fill=gelb!30];
\tikzstyle{two} = [snode]
\tikzstyle{four} =  [shape= circle, draw=grun, ultra thick, fill=grun!20]

\tikzstyle{edge style one} = [lila, ultra thick]
\tikzstyle{edge style two} = [darkorange, ultra thick]
\def\height{1.5}
\def \secondheight{3}
\def\width{2.5}
\def \secondwidth{2*\width}

\node(before) at (0,0){
\begin{tikzpicture}[scale = 0.08]
\node (T_before) at (0,0) {
    \begin{tikzpicture}[scale = 0.4]
        \node[one] (t_1) at (0, \height) {$t_1$};
        \node[two] (t_2) at (0, -\height) {$t_2$};
        \node[three] (t_3) at (-\width, -\secondheight){$t_3$};
        \node[four] (t_p) at (-\width, \secondheight) {$t_p$};
        \draw (t_1) to (t_2);
        \draw (t_1) to (t_p);
        \draw (t_2) to (t_3);
        \draw[edge style one] (t_1) to (t_3);
        %need a partial arc between t_1 and t_p
        \node (tp') at (-\secondwidth, \height){};
        \node (t3') at (-\secondwidth, -\height){};
        
        \draw[bend left=35] (t_3) to (t3');
        \draw[bend right = 35] (t_p) to (tp');
        
        \node[rotate =90] at (-\secondwidth, 0) {\Large $\dots$};
    \end{tikzpicture}
};
\node (Z_before at (T_before.east)[xshift=3.2cm]{
    \begin{tikzpicture}[scale =0.4]
        \node[one] (z_1) at (0, \height) {$z_1$};
        \node[two] (z_2) at (0, -\height) {$z_2$};
        \node[three] (z_3) at (\width, -\secondheight){$z_3$};
        \node[four] (z_p) at (\width, \secondheight) {$z_q$};
        \draw (z_1) to (z_2);
        \draw (z_1) to (z_p);
        \draw (z_2) to (z_3);
        \draw[edge style two] (z_1) to (z_3);
        %need a partial arc between t_1 and t_p
        \node (before_zp') at (\secondwidth, \height){};
        \node (z3') at (\secondwidth, -\height){};
        
        \draw[bend right=35] (z_3) to (z3');
        \draw[bend left = 35] (z_p) to (before_zp');
        
        \node[rotate =90] at (\secondwidth, 0) {\Large $\dots$};
    \end{tikzpicture}
};
\end{tikzpicture}
};
\node (after) at (before.east)[xshift = 4cm]{
    \begin{tikzpicture}
        \node (T_after) at (0,0) {
            \begin{tikzpicture}[scale = 0.4]
                \def\height{-1.5}
                \def\secondheight{-3}
                \node[one] (t_1) at (0, \height) {$t_1$};
                \node[two] (t_2) at (0, -\height) {$t_2$};
                \node[three] (t_3) at (-\width, -\secondheight){$t_3$};
                \node[four] (t_p) at (-\width, \secondheight) {$t_p$};
                \draw (t_1) to (t_2);
                 \draw (t_1) to (t_p);
                \draw (t_2) to (t_3);
                 \draw[edge style one] (t_1) to (t_3);
                 %need a partial arc between t_1 and t_p
                \node (after_tp') at (-\secondwidth, \height){};
                \node (t3') at (-\secondwidth, -\height){};
        
                \draw[bend right=35] (t_3) to (t3');
                \draw[bend left = 35] (t_p) to (after_tp');
        
                \node(arrow_end)[rotate =90] at (-\secondwidth, 0) {\Large $\dots$};
            \end{tikzpicture}
            };
        \node (Z_after at (T_after.east)[xshift=3.2cm]{
        \begin{tikzpicture}[scale =0.4]
            \node[one] (z_1) at (0, \height) {$z_1$};
            \node[two] (z_2) at (0, -\height) {$z_2$};
            \node[three] (z_3) at (\width, -\secondheight){$z_3$};
            \node[four] (z_p) at (\width, \secondheight) {$z_q$};
            \draw (z_1) to (z_2);
            \draw (z_1) to (z_p);
            \draw (z_2) to (z_3);
            \draw[edge style two] (z_1) to (z_3);
            %need a partial arc between t_1 and t_p
            \node (zp') at (\secondwidth, \height){};
            \node (z3') at (\secondwidth, -\height){};
        
            \draw[bend right=35] (z_3) to (z3');
            \draw[bend left = 35] (z_p) to (zp');
        
            \node[rotate =90] at (\secondwidth, 0) {\Large $\dots$};
    \end{tikzpicture}
};
\node (x) at (-3,0) {};
\node (y) at (-1.5, 0) {};
\draw[->, ultra thick] (x) to (y);

\end{tikzpicture}
};
\end{tikzpicture}
    \caption{An illustration of Observation~\ref{obs:flipping}.}
    \label{fig:flip}
\end{figure}

An example of the observation above is given in Figure \ref{fig:flip}. Beside Observations~\ref{obs:B} and~\ref{obs:flipping}, the third useful observation is as follows.

\begin{claim}\label{cl:contained}
$N[v_1,v_2,w_1,w_2]\subseteq N[s_1]\cup \{u_1,u_2,x_1,x_2\}$. Additionally, if $\{v_1,v_2\}\cap\{w_1,w_2\}=\emptyset$, then $N[v_1,v_2]\subseteq N[s_1]\cup \{u_1,u_2\}$ and $N[w_1,w_2]\subseteq N[s_1]\cup \{x_1,x_2\}$.
\end{claim}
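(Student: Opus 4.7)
The plan is to prove the two inclusions by tracking where a neighbor of a vertex in $\{v_1,v_2,w_1,w_2\}$ can live, exploiting Claims~\ref{cl:Sstable}, \ref{cl:Snocommon}, and~\ref{cl:s1degree2}. By symmetry it suffices to work with $v_1$. I first note that $v_1$ is a strict neighbor of $s_1$ (since $s_1$ itself has no neighbor in $N[s_2]$: Claim~\ref{cl:Sstable} precludes $s_2$, and Claim~\ref{cl:Snocommon} precludes the strict neighbors of $s_2$). Using Claim~\ref{cl:Sstable} again, $v_1 \notin S$, and by Claim~\ref{cl:Snocommon} the vertex $s_1$ is the unique neighbor of $v_1$ in $S$, i.e.\ $s(v_1)=s_1$.

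Now let $y$ be a neighbor of $v_1$ with $y\notin N[s_1]$. If $y\in S$, then $y$ is a neighbor of $v_1\in N(s_1)$, so Claim~\ref{cl:Snocommon} forces $y=s_1$, contradicting $y\notin N[s_1]$. Hence $y\notin S$, and $y$ has a unique neighbor $s(y)\in S$. The edge $v_1y$ of $G$ gives rise to the edge $s_1s(y)$ in $H=H_G(S)$, and this is not a loop since $y\notin N[s_1]$ forces $s(y)\neq s_1$. By Claim~\ref{cl:s1degree2}, the vertex $s_1$ has exactly two neighbors in $H$, namely $s_2$ and $s_3$, so $s(y)\in\{s_2,s_3\}$. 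If $s(y)=s_2$, then $y\in N[s_2]$ has a neighbor $v_1\in N[s_1]$, so by definition $y\in\{u_1,u_2\}$; symmetrically, if $s(y)=s_3$ then $y\in\{x_1,x_2\}$. This proves the first containment.

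For the second part, assume $\{v_1,v_2\}\cap\{w_1,w_2\}=\emptyset$ and let $y$ be a neighbor of $v_1$. By the first part, $y\in N[s_1]\cup\{u_1,u_2,x_1,x_2\}$. If $y\in\{x_1,x_2\}$, then $y\in N[s_3]$, so $v_1$ is a vertex in $N[s_1]$ with a neighbor in $N[s_3]$; by the very definition of $w_1,w_2$, this forces $v_1\in\{w_1,w_2\}$, contradicting our assumption. Hence $y\in N[s_1]\cup\{u_1,u_2\}$, as needed. The analogous argument with the roles of $(s_2,u_i,v_i)$ and $(s_3,x_i,w_i)$ swapped yields the second inclusion.

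The argument is essentially bookkeeping, and the key obstacle — or rather the key ingredient — is the combination of Claim~\ref{cl:Snocommon} (ensuring that each vertex outside $S$ has a uniquely defined $s(\cdot)$ in $S$) with Claim~\ref{cl:s1degree2} (pinning down the neighborhood of $s_1$ in $H$); once both are available, all neighbors of $v_1,v_2,w_1,w_2$ outside $N[s_1]$ are forced through the two ``channels'' corresponding to the only two edges of $H$ at $s_1$.
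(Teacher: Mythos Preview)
Your proof is correct. For the first containment you unfold exactly what the paper means by ``the only neighbors of $s_1$ in $H$ are $s_2$ and $s_3$'', so the two arguments coincide.

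For the second containment your route is genuinely different from the paper's, and cleaner. You simply observe that a neighbor of $v_1$ lying in $\{x_1,x_2\}\subseteq N[s_3]$ would make $v_1$ itself a vertex of $N[s_1]$ with a neighbor in $N[s_3]$, hence by definition one of the $w_i$, contradicting the hypothesis $\{v_1,v_2\}\cap\{w_1,w_2\}=\emptyset$. The paper instead assumes $v_1x_1\in E(G)$ and argues structurally: if $w_2$ exists one exhibits a $K_{2,3}$-minor on branch sets $\{s_1\},\{x_1,x_2,s_3\}$ versus $v_1,w_1,w_2$; if $w_2$ does not exist one performs a reduction (deleting $N[s_1]$ and adding edges among $\{u_1,u_2,x_1,x_2\}$) to reach a smaller counterexample. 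Your definitional argument bypasses all of this and is the more economical way to establish the claim as stated; indeed, the authors themselves flag in a marginal note that the second statement appears to follow directly from the definitions. What the paper's longer argument buys is a preview of the reduction machinery used repeatedly afterwards, but it is not needed for the claim itself.
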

\textcolor{blue}{Alex: I do not really understand the second property of this claim, is that not part of the definition? What we are proving here seems the same as Claim 2.12. }
% This observation is argued similarly to Observation~\ref{obs:B}. 
\begin{proof}
The first statement holds by the assumption that the only neighbors of $s_1$ in $H$ are $s_2$ and $s_3$. To see the second statement, suppose that $v_1$ is adjacent to $x_1$ (the other cases are similar). Then there is a $K_{2,3}$-minor if $w_2$ exists: $\{x_1,x_2,s_3\}$ and $\{s_1\}$ are both adjacent to $v_1,w_1$ and $w_2$.

So it must be the case that $w_2$ does not exist. We first consider the case in which $s_1$ has a neighbor in $D$. Since $s_1$ can have at most 3 neighbors, we find that $v_2$ does not exist. We can remove $N[s_1]$ and add edges $x_1u_1,x_1u_2,x_2u_2$ in order to obtain a smaller counterexample (in terms of $|S|$). 
(Indeed, at most one of the two edges $v_1x_1$ and $w_1x_2$ exists in $G$ so $v_1$ and $w_1$ cannot both be in $B$. So deleting $N[s_1]$ decreases $B\cup D$ by at most $4$.)

If $s_1$ has no neighbors in $D$, then we may remove $N[s_1]$ and add edges between $\{x_1,x_2,u_1,u_2\}$ in such a way that at most one of the vertices in that set decreases in degree except if $u_2$,$x_2$ both do not exist, in which case at most one of $v_1,w_1$ is in $B$; this again gives a smaller counterexample (in terms of $|S|$).
\end{proof}
Since $s_1$ is adjacent to $s_2$ and $s_3$ in $H$, all of $u_1,x_1,v_1$ and $w_1$ exist. We assume that either $\{v_1,v_2\}\cap \{w_1,w_2\}=\emptyset$ or $v_1=w_1$. Note that $\{u_1,u_2\}\cap \{x_1,x_2\}=\emptyset$ since $s_2$ and $s_3$ do not have common neighbors by Claim~\ref{cl:Snocommon}. 
See Figure \ref{fig:s2s3_summary} for an illustration. For simplicity, when depicting which edges to add in which cases, we represent ``$u_2$ does not exist'' as ``$u_2$ is possibly equal to $u_1$'' (and variations). This means merely that if $u_2$ does not exist then the 
edges involving $u_2$ involve $u_1$ instead -- multiple edges are ignored.

\begin{figure}[!h]
    \centering
    %\section{everything figure style one}

\begin{tikzpicture}[scale=0.7] 
\def\height{1.5};
\def\width{2.5};

\node[snode] (s_2) at (0, 0){$s_2$};
\node[snode] (s_1) at (3*\width ,0){$s_1$};
\node[snode] (s_3) at (6* \width, 0){$s_3$};

\node[normal node] (v_1) at (2*\width, \height){$v_1$};
\node[normal node] (w_1) at (4*\width, \height){$w_1$};
\node[maybe node] (v_2) at (2*\width, -\height) {$v_2$};
\node[maybe node] (w_2) at (4*\width, -\height){$w_2$};

\node[normal node] (u_1) at (1*\width, \height){$u_1$};
\node[maybe node] (u_2) at (1*\width, -\height){$u_2$};

\node[normal node] (x_1) at (5*\width, \height){$x_1$};
\node[maybe node] (x_2) at (5*\width, -1*\height){$x_2$};

%edges
\draw (s_1) to (v_1);
\draw (s_1) to (w_1);
\draw (v_1) to (u_1);
\draw (v_2) to (u_2);
\draw (s_2) to (u_1);
\draw (s_2) to (u_2);
\draw (s_3) to (x_1);
\draw (s_3) to (x_2);
\draw (w_1) to (x_1);
\draw (w_2) to (x_2);
\draw (s_1) to (v_2);
\draw (s_1) to (w_2);

%u_1 and u_2 are maybe equal, maybe adjacent
\path
(u_1) edge node[above,rotate= 90] {$=$?} (u_2) [maybe equal]
(x_1) edge node[above,rotate= 90] {$=$?} (x_2) [maybe equal]
(v_1) edge node[above,rotate= 90] {$=$?} (v_2) [maybe equal]
(w_1) edge node[above,rotate= 90] {$=$?} (w_2) [maybe equal]
%v_1 = w_1?
(v_1) edge node[above] {$=$?} (w_1) [maybe equal];
\end{tikzpicture}
    \caption{When $s_1$ has exactly two neighbors $s_2, s_3$ in $H$, each of $s_2, s_3$ has at most two neighbors with edges to vertices in $N[s_1]$. Moreover, $s_2$ and $s_3$ may have at most one common neighbor in $N[s_1]$. We draw vertices which may not exist in $G$ as a dotted circle and connect vertices which may be equal with dotted edges. There may be more edges present in that are not drawn. }
    \label{fig:s2s3_summary}
\end{figure}
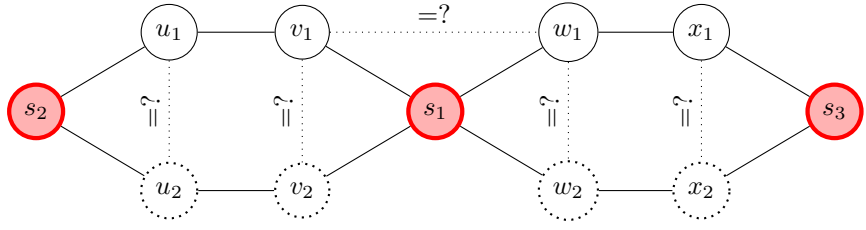

\begin{claim}
One of $w_2$ and $v_2$ exists.
\label{cl:oneofw2v2exists}
\end{claim}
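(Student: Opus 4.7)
The plan is to suppose, for contradiction, that neither $v_2$ nor $w_2$ exists and to construct a strictly smaller counterexample to~\ref{eq:ddagger}. The candidate pair is $(G',S')$ with $S':=S\setminus\{s_1\}$ and $G'$ obtained from $G\setminus N[s_1]$ by adding a carefully chosen small set of edges among $\{u_1,u_2,x_1,x_2\}$, drawn inside the region of the outer face that was freed by the removal of $N[s_1]$. Since $\deg_H(s_1)=2$, Claim~\ref{cl:contained} together with Claim~\ref{cl:Snocommon} implies that every external neighbour of $N[s_1]$ lies in $\{u_1,u_2,x_1,x_2\}$, and because $v_2,w_2$ do not exist each $u_i$ loses only its edge to $v_1$ and each $x_j$ only its edge to $w_1$ when $N[s_1]$ is deleted. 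Following the pattern used in the proof of Claim~\ref{cl:contained}, a short embedding case analysis (handling the degenerate situations in which one of $u_2,x_2$ is absent or a candidate patch edge already lies in $G$) shows that one can add edges among $\{u_1,u_2,x_1,x_2\}$ so that at most one vertex in that set decreases in degree, while keeping $G'$ outerplanar. The set $S'$ dominates $G'$ because every vertex of $V(G')$ has $s$-value different from $s_1$.

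The net loss to $|B\cup D|$ is therefore bounded by $|N[s_1]\cap(B\cup D)|$ plus at most $1$ (for a possible external $B$-vertex whose degree was not fully compensated; no external vertex can drop out of $D$, since its degree and each of its neighbours' degrees can only decrease). I next bound $|N[s_1]\cap(B\cup D)|\le 3$, splitting on whether $s_1$ has a neighbour in $D$. If $s_1$ has a $D$-neighbour $d_1$, the argument inside the proof of Claim~\ref{cl:nbhoodD} forces $\deg(s_1)\le 3$; combined with $v_1,w_1\in N(s_1)$ and the fact that neither $v_1$ nor $w_1$ lies in $D$ (each has at least two neighbours, so inclusion in $D$ would contradict Claim~\ref{cl:DS}), this gives $N(s_1)=\{v_1,w_1,d_1\}$ when $v_1\neq w_1$ and $N(s_1)\subseteq\{v,d_1,d_2\}$ when $v_1=w_1=v$, yielding $|N[s_1]\cap(B\cup D)|\le 3$. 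If instead $s_1$ has no neighbour in $D$, then for any $y\in N(s_1)\setminus\{v_1,w_1\}$ the discussion preceding the claim (using $\deg_H(s_1)=2$) forces $N(y)\subseteq N[s_1]$. If $y\in B$, then $\deg(y)\ge 4$ would force $y$ and $s_1$ to share at least three common neighbours in $N(s_1)$, producing a $K_{2,3}$ subgraph and contradicting outerplanarity. Hence $y\in A$, and so $|N[s_1]\cap(B\cup D)|\le|\{v_1,w_1\}|\le 2$.

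Putting this together, $|B_{S'}(G')\cup D_{S'}(G')|\ge |B_S(G)\cup D_S(G)|-4$ while $|S'|=|S|-1$. Because $(G,S)$ is a counterexample and the quantities are integers, $|B_S(G)\cup D_S(G)|\ge 4|S|+1$, so
\[
|B_{S'}(G')\cup D_{S'}(G')|\ \ge\ 4|S|-3\ =\ 4|S'|+1\ >\ 4|S'|,
\]
making $(G',S')$ itself a counterexample. Since $|S'|<|S|$, $(G',S')$ is strictly smaller under~\ref{eq:ddagger}, contradicting the choice of $(G,S)$. The main obstacle is the outerplanarity-preserving edge addition step, which needs the short embedding case analysis mirroring the one in the proof of Claim~\ref{cl:contained}; everything else is a direct degree count.
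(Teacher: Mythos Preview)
Your overall strategy coincides with the paper's: delete $N[s_1]$, patch degrees of $\{u_1,u_2,x_1,x_2\}$ by adding edges, and bound the lost $B\cup D$ vertices by $4$. The bound $|N[s_1]\cap(B\cup D)|\le 3$ that you prove is correct and your $K_{2,3}$ argument for neighbours $y\in N(s_1)\setminus\{v_1,w_1\}$ is clean.

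There is, however, a genuine gap in the ``at most one vertex of $\{u_1,u_2,x_1,x_2\}$ drops in degree'' step. When \emph{both} $u_2$ and $x_2$ are absent and the edge $u_1x_1$ is already in $G$, no patch edge can be added among $\{u_1,x_1\}$, so \emph{both} $u_1$ and $x_1$ lose one unit of degree and the external loss is $2$, not $1$. Combined with your bound $|N[s_1]\cap(B\cup D)|\le 3$ (which is all you prove when $s_1$ has a $D$-neighbour), the total could a~priori be $5$. The paper treats this sub-case separately: it shows directly that when neither $u_2$ nor $x_2$ exists one has $|N[s_1]\cap(B\cup D)|\le 2$. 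The reason is that if $s_1$ has a $D$-neighbour then $\deg(s_1)\le 3$, and since $d_1$'s unique neighbour is $s_1$, any $v_1\neq w_1$ satisfies $N(v_1)\subseteq\{s_1,w_1,u_1\}$, forcing $v_1\notin B$ (and symmetrically $w_1\notin B$); the case $v_1=w_1$ is similar. This extra line is what closes the accounting, and your proposal omits it. Once you add this refinement in the doubly-degenerate case, your argument matches the paper's.
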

\begin{proof}
Suppose that neither $w_2$ nor $v_2$ exists. It is possible that $v_1=w_1$, and that $u_2$ or $x_2$ do not exist. 
By Claim \ref{cl:contained}, if $v_1\neq w_1$, then $u_1,u_2$ are not adjacent to $w_1$ and $x_1,x_2$ are not adjacent to $v_1$.
\begin{figure}[!h]
    \centering

    \begin{tikzpicture}[scale=0.5]     %\label{Carla1_Before}
\node (before) at (0,0){
    \begin{tikzpicture}[scale=0.5]  
    \def\height{1.5};
    \def\width{2};
    
    \node[snode] (s_2) at (0, 0){$s_2$};
    \node[snode] (s_1) at (3*\width ,0){$s_1$};
    \node[snode] (s_3) at (6* \width, 0){$s_3$};
    
    \node[normal node] (v_1) at (2*\width, 0){$v_1$};
    \node[normal node] (w_1) at (4*\width, 0){$w_1$};
    
    \node[normal node] (u_1) at (1*\width, \height){$u_1$};
    \node[maybe node] (u_2) at (1*\width, -\height){$u_2$};
    
    \node[normal node] (x_1) at (5*\width, \height){$x_1$};
    \node[normal node] (x_2) at (5*\width, -1*\height){$x_2$};
    
    %edges
    \draw (s_1) to (v_1);
    \draw (s_1) to (w_1);
    \draw (v_1) to (u_1);
    \draw (v_1) to (u_2);
    \draw (s_2) to (u_1);
    \draw (s_2) to (u_2);
    \draw (s_3) to (x_1);
    \draw (s_3) to (x_2);
    \draw (w_1) to (x_1);
    \draw (w_1) to (x_2);
    \draw[maybe edge] (u_1) to (x_1);
    % %u_1 and u_2 are maybe equal, maybe adjacent
     \path
     (u_1) edge node[above,rotate= 90] {$=$?} (u_2) [dotted];
     \draw[dotted, bend right=60] (v_1) to (w_1);
     \node at (3*\width, -1.14*\height){$=$?}; 
    % \node[yellow node, dotted] (d_1) at (3*\width, -2*\height){$d_1$};
    % \node[yellow node, dotted] (d_2) at (4*\width, -2*\height){$d_2$};
    % \draw[maybe edge] (d_1) to (s_1);
     %\draw[maybe edge] (d_2) to (s_1);
    \end{tikzpicture}
    
};
\node(after) at (before.east)[anchor=east,xshift=6cm]{
    \begin{tikzpicture}[scale=.5]
    \def\height{1.5};
    \def\width{2.5};
    
    \node[snode] (s_2) at (0, 0){$s_2$};
    \node[snode] (s_3) at (3* \width, 0){$s_3$};

    \node[normal node] (u_1) at (1*\width, \height){$u_1$};
    \node[normal node] (u_2) at (1*\width, -\height){$u_2$};
    
    \node[normal node] (x_1) at (2*\width, \height){$x_1$};
    \node[normal node] (x_2) at (2*\width, -1*\height){$x_2$};
    
    %edges
    \draw (s_2) to (u_1);
    \draw (s_2) to (u_2);
    \draw (s_3) to (x_1);
    \draw (s_3) to (x_2);
    \draw[new edge] (u_1) to (x_1);
    \draw[new edge] (u_2) to (x_2);
    \draw[new edge] (u_1) to (x_2);

    %u_1 and u_2 are maybe equal, maybe adjacent
    \path
    (u_1) edge node[above,rotate= 90] {$=$?} (u_2) [dotted]
    (x_1) edge node[above,rotate= 90] {$=$?} (x_2) [dotted];
\end{tikzpicture}
};
\draw [->, ultra thick] (before)--(after);
\end{tikzpicture}

    \caption{The case where $w_2, v_2$ do not exist. The original graph is drawn at the left and the modified graph is drawn at the right. The wavy line indicates there may be an edge between $u_1$ and $x_1$. Edges that may have been added are drawn in blue. Note that $u_2$ may not exist. There may be more edges which are not drawn (for instance $v_1$ might be adjacent to $w_1$) but these edges are not relevant to our argument. }
    \label{fig:carla1}
\end{figure}
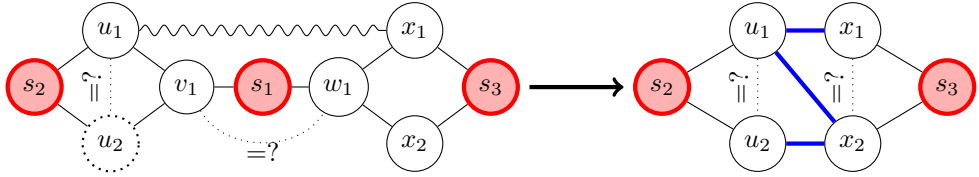

The degrees of $x_1,x_2,u_1,u_2$ in $G \setminus N[s_1]$ are at least one less than their degrees in $G$. Every vertex in $V(G) \setminus (N[s_1]\cup\{x_1,x_2,u_1,u_2\})$ has the same degree in $G$ and in $G \setminus N[s_1]$. Let  $S'=S\setminus \{s_1\}$, and note that $S'$ dominates $G \setminus N[s_1]$.
\begin{itemize}
    \item Suppose $x_2,u_2$ do not exist. If $v_1$ belongs to $B$, then it needs to have a neighbor which is not $u_1,s_1$ or one of $x_1,w_1$ (depending on whether $v_1=w_1$), so it shares a neighbor with $s_1$ which is not in $B \cup D$. This implies that if $v_1 \in B$, then $s_1$ can have a neighbor $d_1\in D$ or $w_1 \neq v_1$, but not both at the same time. It follows that regardless of whether $v_1\in B$, we have $|N[s_1]\cap(B \cup D)|\leq  2$. But now $|(N[s_1]\cup\{x_1,u_1\})\cap(B \cup D)|\leq 4$, so $(G\setminus N[s_1],S')$ is a smaller counterexample, a contradiction (in terms of $|S|$).
    \item By symmetry, we assume that $x_2$ exists. If $u_1$ and $u_2$ both exist, then they are not distinguishable at this point, which means we can swap their label. The same holds for $x_1$ and $x_2$. Hence we may assume that the vertices appear in the outer face in the order $x_1,x_2,u_2,u_1$, and that either $u_1x_1$ is an edge of $G$ or there is no edge between $\{u_1,u_2\}$ and $\{x_1,x_2\}$.
Let $G'$ be the graph obtained from $G\setminus N[s_1]$ by adding the edges $u_1x_1$ (if it is not already present), $u_1x_2$ and (if $u_2$ exists) the edge $u_2x_2$ (see Figure \ref{fig:carla1}). Note that $G'$ is outerplanar and that $S'$ dominates $G'$. Since $G$ is outerplanar, if $u_1x_1$ is an edge in $G$, then neither $u_1x_2$ nor $u_2x_2$ is an edge in $G$. In $G'$, the degrees of the vertices $u_1,u_2,x_2$ are at least as large as their respective degrees in $G$ (the degree of $x_1$ might have dropped if the edge $u_1x_1$ was already present in $G$).  Note that $|\{v_1,w_1,x_1\} \cup (N[s_1]\cap D)|\leq 4$, hence $|B_{S'}(G') \cup D_{S'}(G')|\geq |B \cup D|-4$ so $(G', S')$ is a counterexample. But $|S'| < |S|$, contradicting our choice of $(G, S)$. 
\end{itemize}
\cqed
\end{proof}
\begin{claim}\label{cl:ifv1w1thenw2exists}
If $w_1=v_1$, then $v_2$ and $w_2$ exist.
\end{claim}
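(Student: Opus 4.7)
The plan is to argue by contradiction. Suppose $w_1 = v_1$ and, by the symmetry between the $s_2$- and $s_3$-sides (i.e.\ swapping the roles of $v_i$ and $w_i$), assume without loss of generality that $v_2$ does not exist. Claim~\ref{cl:oneofw2v2exists} then forces $w_2$ to exist.

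First I would record the structural consequences. Since $v_2$ does not exist, no vertex of $N(s_1)$ other than $v_1$ has a neighbor in $N[s_2]$; in particular, because $w_2 \neq v_1$, we have $N(w_2) \cap N[s_2] = \emptyset$, so combined with Claim~\ref{cl:contained} we get $N(w_2) \subseteq N[s_1] \cup \{x_1,x_2\}$ and $N(v_1) \subseteq N[s_1] \cup \{u_1,u_2,x_1,x_2\}$. Moreover, $s_1$ has at most one neighbor in $D$, since any $D$-neighbor forces $\deg(s_1) \leq 3$ and $v_1, w_2$ already occupy two of the available slots; call this neighbor $d_1$ if it exists. A single $K_{2,3}$-check rules out $x_1$ and $x_2$ being simultaneously adjacent to both $v_1$ and $w_2$ (the forbidden minor would have parts $\{x_1,x_2\}$ and $\{v_1,w_2,s_3\}$), which is the key outerplanarity constraint I will exploit.

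Next I would construct the candidate smaller counterexample: let $S' := S \setminus \{s_1\}$ and let $G'$ be obtained from $G \setminus N[s_1]$ by adding a small set of edges among $\{u_1, u_2, x_1, x_2\}$ (for concreteness, a subset of $\{u_1 x_1,\, u_1 x_2,\, u_2 x_2\}$, with the convention that ``$u_2$ not present'' is read as $u_2 = u_1$, and analogously for $x_2$; Observation~\ref{obs:flipping} is used to reconcile embeddings if the removal disconnects $G$). The edges are chosen so that $G'$ remains outerplanar and so that at most one vertex of $\{u_1,u_2,x_1,x_2\}$ strictly decreases in degree from $G$ to $G'$. The set $S'$ dominates $G'$ because every vertex of $V(G) \setminus N[s_1]$ has its (unique, by Claim~\ref{cl:Snocommon}) dominator in $S \setminus \{s_1\}$.

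Finally I would bound the loss $|B\cup D| - |B_{S'}(G') \cup D_{S'}(G')|$. Direct losses from $N[s_1]$ amount to at most three: $v_1$ and $w_2$ if each is in $B$, and $d_1$ if it exists; any remaining neighbor of $s_1$ has all of its neighbors in $N[s_1]$, and using the $K_{2,3}$-minor-free structure together with the degree bound on $s_1$ one checks that such a vertex cannot lie in $B$. The indirect losses among $\{u_1,u_2,x_1,x_2\}$ contribute at most one additional unit, by the choice of added edges. Therefore at most four vertices are removed from $B \cup D$, so $(G',S')$ is strictly smaller than $(G,S)$ with respect to \eqref{eq:ddagger}, contradicting the minimality of $(G,S)$.

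The hard part will be the bookkeeping for the indirect losses: one must track, across the subcases determined by the existence of $u_2$, $x_2$, and $d_1$ and by which of the edges $v_1 x_i$, $w_2 x_i$ are present, exactly which replacement edges can be drawn outerplanarly and which degrees they preserve. The $K_{2,3}$ observation above is what rules out the most awkward configuration, in which both $x_1$ and $x_2$ would need two compensating edges simultaneously.
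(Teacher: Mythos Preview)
Your proposal is correct and takes essentially the same approach as the paper, modulo choosing the symmetric side: the paper assumes (via Claim~\ref{cl:oneofw2v2exists}) that $v_2$ exists and $w_2$ does not, whereas you assume $v_2$ is missing and $w_2$ exists. The construction (delete $N[s_1]$, set $S'=S\setminus\{s_1\}$, patch in edges among $\{u_1,u_2,x_1,x_2\}$) and the key $K_{2,3}$ observation (at most one of the two ``far'' vertices can be adjacent to both relevant members of $N(s_1)$) are exactly what the paper uses.

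Two small remarks on your write-up. First, your justification that a ``remaining'' neighbor of $s_1$ cannot lie in $B$ is really Observation~\ref{obs:B} (such a vertex has all its neighbours in $N[s_1]$), not a $K_{2,3}$/degree-bound argument. Second, the paper's subcase split is governed by whether some edge $u_ix_j$ is \emph{already} present in $G$ (this is what obstructs freely adding the compensating edges), rather than by which of the edges $v_1x_i$, $w_2x_i$ are present; your acknowledged ``hard part'' should therefore also branch on the pre-existing $u_ix_j$ edges. The paper handles that case with a short extra argument bounding the degree of the vertex playing the role of your $w_2$, rather than by adding all three patch edges.
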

\begin{proof}
By Claim~\ref{cl:oneofw2v2exists} we can assume $v_2$ exists. Suppose $w_2$ does not exist and $w_1=v_1$. We remove $N[s_1]$ and add edges between $\{u_1,u_2\}$ and $\{x_1,x_2\}$ as above to ensure that for all but at most one of them, the degree does not decrease. To see an illustration of how the edges are added, see Figure \ref{fig:carla2}.
We suppose first that there are no edges between $\{u_1,u_2\}$ and $\{x_1,x_2\}$.
The edges remedy the degree for $x_1,x_2$, since they only lost $w_1$, and for one of $u_1,u_2$ (if they both exist); indeed, it is not possible that both $u_1$ and $u_2$ are adjacent to both $v_1$ and $v_2$ (since we would obtain a $K_{2,3}$ when considering $s_1$ as well). 
By Claim~\ref{cl:contained}, the degrees of other vertices are not affected by removing $N[s_1]$. 
\begin{figure}[!h]
    \centering
    \begin{tikzpicture}[scale=0.5]     %\label{Carla1_Before}
\node (before) at (0,0){
    \begin{tikzpicture}[scale=0.5]  
    \def\height{1.5};
    \def\width{2};
    
    \node[snode] (s_2) at (0, 0){$s_2$};
    \node[snode] (s_1) at (3*\width ,0){$s_1$};
    \node[snode] (s_3) at (6* \width, 0){$s_3$};
    
    \node[normal node] (v_1) at (3*\width, 2*\height) {$v_1,w_1$};
    \node[normal node] (v_2) at (2*\width, 0){$v_2$};
    
    \node[normal node] (u_1) at (1*\width, \height){$u_1$};
    \node[maybe node] (u_2) at (1*\width, -\height){$u_2$};
    
    \node[normal node] (x_1) at (5*\width, \height){$x_1$};
    \node[maybe node] (x_2) at (5*\width, -1*\height){$x_2$};
    
    %edges
    \draw (s_1) to (v_1);
    \draw (s_1) to (v_2);
    \draw (v_1) to (u_1);
    \draw (v_2) to (u_2);
    \draw (s_2) to (u_1);
    \draw (s_2) to (u_2);
    \draw (s_3) to (x_1);
    \draw (s_3) to (x_2);
    \draw (v_1) to (x_1);
    \draw (v_1) to (x_2);
    
    %u_1 and u_2 are maybe equal, maybe adjacent
    \path
    (u_1) edge node[above,rotate= 90] {$=$?} (u_2) [dotted]
    (x_1) edge node[above,rotate= 90] {$=$?} (x_2) [dotted];
    \end{tikzpicture}
};
\node(after) at (before.east)[anchor=east,xshift=6cm]{
    \begin{tikzpicture}[scale=.5]
    \def\height{1.5};
    \def\width{2.5};
    
    \node[snode] (s_2) at (0, 0){$s_2$};
    \node[snode] (s_3) at (3* \width, 0){$s_3$};

    \node[normal node] (u_1) at (1*\width, \height){$u_1$};
    \node[normal node] (u_2) at (1*\width, -\height){$u_2$};
    
    \node[normal node] (x_1) at (2*\width, \height){$x_1$};
    \node[normal node] (x_2) at (2*\width, -1*\height){$x_2$};
    
    %edges
    \draw (s_2) to (u_1);
    \draw (s_2) to (u_2);
    \draw (s_3) to (x_1);
    \draw (s_3) to (x_2);
    \draw[new edge] (u_1) to (x_1);
    \draw[new edge] (u_2) to (x_2);
    \draw[new edge] (u_1) to (x_2);

    %u_1 and u_2 are maybe equal, maybe adjacent
    \path
    (u_1) edge node[above,rotate= 90] {$=$?} (u_2) [dotted]
    (x_1) edge node[above,rotate= 90] {$=$?} (x_2) [dotted];
\end{tikzpicture}
};
\draw [->, ultra thick] (before)--(after);
\end{tikzpicture}
    \caption{An example of the reduction for the case where $v_1$ is equal to $w_1$, $v_2$ exists and $w_2$ does not exist. We only draw edges which are relevant to our argument.}
    \label{fig:carla2}
\end{figure}
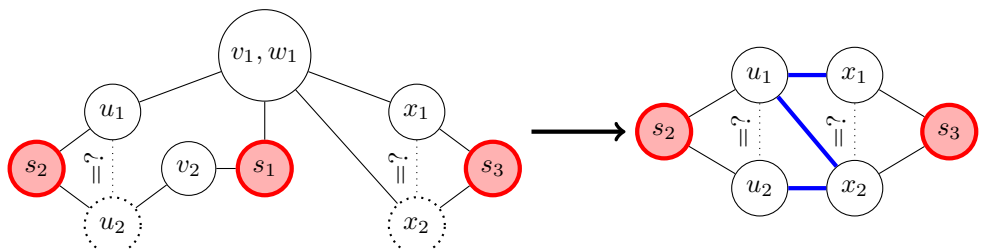
Again, since $|N[s_1]\cap (B\cup D)|\leq 3$, we have removed a vertex from $S$ and at most 4 from $B\cup D$ so we have constructed a smaller counterexample (in terms of $|S|$).

We now assume $u_1x_1$ is an edge.
\begin{itemize}
    \item Assume that $u_2$ does not exist. Now $v_2$ has degree at most $3$ unless it has a common neighbor with $s_1$, but then $s_1$ has no neighbor in $D$ and we lose only two vertices from $N[s_1]$ and possibly $u_1,x_1$.
    \item Assume now that $u_2$ exists. Both $x_1$ and $x_2$ lose at most one edge, and we can ensure both gain at least one edge. So if we lose only two vertices from $N[s_1]$, plus possibly $u_1,u_2$, then we lose at most four vertices from $B\cup D$ in total. If there are three vertices from $B\cup D$ in $N[s_1]$, then $v_2$ is adjacent to both $u_1,u_2$ and $s_1$ has a neighbor $d$ in $D$. We know that $v_1$ is adjacent to $u_1$ or $u_2$, but since there is the path $v_1x_1u_1$, we know that if $v_1$ would be adjacent to $u_2$, then there would be a $K_{2,3}$-minor with $v_1,v_2$ in one part and $\{u_1,x_1\},u_2,s_1$ in the other part.
    Since $u_2$ is not adjacent to $v_1$, we know that $u_2$ loses at most one edge when deleting $N[s_1]$ and gains an edge when we add the edge $u_2x_2$. This means we lose at most four vertices from $B\cup D$ (counting $u_1,v_1,v_2$ and $d$).
    Hence, $(G \setminus s_1, S \setminus s_1)$ is a smaller counterexample in terms of $|S|$, a contradiction.
\end{itemize}
\cqed
\end{proof}
We henceforth assume that $|\{v_1,v_2,w_1,w_2\}|\geq 3$. In particular, we may assume that $s_1$ has no neighbor in $D$.

\begin{claim}\label{cl:v1neqw1}
We have $v_1 \neq w_1$.
\end{claim}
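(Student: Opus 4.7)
The plan is to derive a contradiction from the assumption $v_1 = w_1$ by constructing a smaller counterexample $(G'', S')$, obtained by deleting $N[s_1]$ and adding a short list of edges between $\{u_1, u_2\}$ and $\{x_1, x_2\}$, in the spirit of Claims~\ref{cl:oneofw2v2exists} and~\ref{cl:ifv1w1thenw2exists}. Writing $v := v_1 = w_1$, Claim~\ref{cl:ifv1w1thenw2exists} guarantees that both $v_2$ and $w_2$ exist; as a first step I would rule out $v_2 = w_2$ by observing that otherwise the branch sets $\{v\}, \{v_2\}$ on one side and $\{s_1\}, N[s_2], N[s_3]$ on the other form a $K_{2,3}$-minor (both $v$ and $v_2$ are adjacent to $s_1$, to some $u_i \in N(s_2)$, and to some $x_j \in N(s_3)$), contradicting outerplanarity. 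Hence $v, v_2, w_2$ are three distinct neighbors of $s_1$.

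I would then set $G' := G \setminus N[s_1]$ and $S' := S \setminus \{s_1\}$. By Claim~\ref{cl:contained}, the only surviving vertices whose $G$-degree differs from their $G'$-degree are $u_1, u_2, x_1, x_2$: each $u_i$ loses only edges to $\{v, v_2\}$ and each $x_j$ only edges to $\{v, w_2\}$. Using Observation~\ref{obs:flipping} to reorganize the outerplanar embedding (flipping the component containing $s_3$ so that $u_1, u_2, x_1, x_2$ become consecutive on the outer face), I would add a short set of edges between $\{u_1, u_2\}$ and $\{x_1, x_2\}$ (following the $u_1 x_1, u_1 x_2, u_2 x_2$ template from Claim~\ref{cl:oneofw2v2exists}, with case adjustments for a missing $u_2$ or $x_2$ or for such edges already present in $G$). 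The resulting graph $G''$ is outerplanar, and every vertex of $\{u_1, u_2, x_1, x_2\}$ has degree in $G''$ at least its degree in $G$.

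Finally, I would bound the loss $|B \cup D| - |B_{S'}(G'') \cup D_{S'}(G'')|$ by $3$. The only candidates for loss are $v, v_2, w_2$ and any ``extra'' neighbor $y \in N(s_1) \setminus \{v, v_2, w_2\}$. For such $y$, Claim~\ref{cl:Snocommon} forces $s(y) = s_1$, and since $s_1$'s only $H$-neighbors are $s_2, s_3$, every neighbor of $y$ must lie in $N[s_1]$; were $y \in B$, then $y$ and $s_1$ would share at least three common neighbors in $N(s_1)$, yielding a $K_{2,3}$-subgraph, contradicting outerplanarity. Hence no such $y$ lies in $B$, the total loss is at most $3$, and $(G'', S')$ is a smaller counterexample with respect to \eqref{eq:ddagger} (in terms of $|S|$), contradicting the minimality of $(G, S)$. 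The main obstacle I anticipate is the case analysis for choosing the added edges (handling missing $u_2$/$x_2$ and pre-existing inter-set edges in $G$) while maintaining outerplanarity, but it should closely parallel the treatment in Claim~\ref{cl:oneofw2v2exists}.
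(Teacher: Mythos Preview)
Your approach is genuinely different from the paper's, and while the overall strategy is plausible, there is a real gap.

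\textbf{What the paper does.} The paper does not delete $N[s_1]$ at all. Instead it \emph{splits} the vertex $v_1=w_1$ into two adjacent vertices $v_1'$ and $w_1'$, assigning the $N[s_2]$-side neighbours to $v_1'$ and the $N[s_3]$-side neighbours to $w_1'$, and then adds the chords $v_1'v_2$ and $w_1'w_2$. This keeps $S$ unchanged but strictly increases $|B_S|$ (both $v_1'$ and $w_1'$ land in $B$), contradicting the \emph{second} extremality criterion in \eqref{eq:ddagger}. No case analysis on $u_2,x_2$ is needed.

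\textbf{Where your argument is incomplete.} Your reduction hinges on the assertion that, after deleting $N[s_1]$ and adding a few edges between $\{u_1,u_2\}$ and $\{x_1,x_2\}$, ``every vertex of $\{u_1,u_2,x_1,x_2\}$ has degree in $G''$ at least its degree in $G$''. This is not guaranteed by the $u_1x_1,u_1x_2,u_2x_2$ template. In the subcase where neither $u_2$ nor $x_2$ exists, both $u_1$ and $x_1$ lose two neighbours ($u_1$ loses $v,v_2$; $x_1$ loses $v,w_2$) while the only edge you can add between them is $u_1x_1$, so each still drops by one. If both $u_1$ and $x_1$ had degree exactly four, the total loss can reach five, and the reduction fails. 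The paper's treatment of the analogous ``neither $u_2$ nor $x_2$ exists'' situation in the all-distinct case (Appendix~\ref{app:neither-u2-nor-x2-exists}) requires creating new vertices and deleting further edges around $s_2$; it does \emph{not} parallel Claim~\ref{cl:oneofw2v2exists}. So your expectation that the missing-$u_2/x_2$ subcases ``should closely parallel'' the earlier claim is too optimistic, and the loss bound of $3$ is unjustified as stated.

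\textbf{Comparison.} Even if your route can be completed, it effectively re-runs the heavy case analysis that the paper postpones to the all-distinct setting. The paper's split trick is precisely designed to sidestep this: it exploits the maximality of $|B\cup D|$ (the second coordinate of \eqref{eq:ddagger}) to reduce the $v_1=w_1$ case to the all-distinct case in one stroke, so that the delicate $u_2/x_2$ analysis only has to be done once.
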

\begin{proof}
If not, then $v_1=w_1$. By Claim~\ref{cl:ifv1w1thenw2exists}, both $v_2$ and $w_2$ exist. Let $G'$ be the outerplanar graph obtained from $G$ by splitting the vertex $v_1$ into two vertices $v_1'$ and $w_1'$, both adjacent to $s_1$ and adjacent to each other, where $v_1'$ is adjacent to $N[v_1]\cap N[s_2]$ and $w_1'$ is adjacent to $N[v_1]\cap N[s_3]$. This gives three neighbors for both $v_1'$ and $w_1'$. Since we can always add the edges $v_1'v_2$ and $w_1'w_2$ (which are chords of a cycle, using also that $N[s_1]\cap(N[s_2]\cup N[s_3])=\emptyset$), we find $|B_{S}(G')|>|B_{S}(G)|$, whereas $D_{S}(G')=D_S(G)$, $S$ is still dominating and $G'$ is outerplanar.
Since we chose $(G, S)$ to be a counterexample minimizing $|S|$ and with respect to that maximizing $|B_S(G) \cup D_S(G)|$, we reach a contradiction.
\cqed
\end{proof}
We henceforth assume that $v_1,w_1$ exist and are distinct and at least one of $v_2,w_2$ exists.
\begin{claim}
The vertices $v_2,w_2$ exist.
\label{cl:w2exists}
\end{claim}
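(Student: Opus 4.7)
The strategy is to argue by symmetry between $s_2$ and $s_3$: assume that $v_2$ exists while $w_2$ does not, and aim to derive a contradiction in the same style as Claims~\ref{cl:ifv1w1thenw2exists} and~\ref{cl:oneofw2v2exists}. Since we have already established that $s_1$ has no neighbor in $D$ and that $v_1\neq w_1$, the three vertices $v_1, v_2, w_1$ will account for all neighbors of $s_1$ contributing to the two $H$-edges at $s_1$.

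The plan is to construct a smaller counterexample $(G',S')$ by setting $S' = S\setminus\{s_1\}$ and letting $G'$ be obtained from $G\setminus N[s_1]$ by inserting a few chords among $\{u_1,u_2,x_1,x_2\}$ along the face freed up by deleting $s_1$. First I would read off, from the fixed outerplanar embedding of $G$, the cyclic order in which the vertices $u_1, u_2, v_1, v_2, w_1, x_1, x_2$ appear on the outer face. Since by Claim~\ref{cl:Snocommon} the neighborhoods $N(s_2)$ and $N(s_3)$ are disjoint from each other and from $N(s_1)$, the $u_i$-vertices and $x_j$-vertices lie on opposite sides of $s_1$ along this face; after removing $N[s_1]$ they become consecutive on the enlarged face, and chords of the form $u_ix_j$ can then be inserted outerplanarly, exactly as in Claims~\ref{cl:ifv1w1thenw2exists} and~\ref{cl:oneofw2v2exists}. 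I would choose the chords so that each $x_j$ gains one edge (to offset losing its edge to $w_1$) and each $u_i$ gains at least one edge.

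I would then bound the loss in $|B\cup D|$. The three vertices $v_1,v_2,w_1$ contribute at most three. For $x_1$ and $x_2$, the added chord compensates exactly for the lost edge to $w_1$, so they retain their degrees and their $B$-membership. Each of $u_1,u_2$ loses at most two edges (to $v_1$ and to $v_2$) and gains at least one from a chord; moreover outerplanarity forbids $u_1$ and $u_2$ from both being adjacent to both $v_1$ and $v_2$, since together with the branch sets $\{s_1\}$ and $\{s_2\}$ this would force a $K_{2,3}$-minor. Hence at most one of $u_1,u_2$ can drop out of $B$, and the total loss from $B\cup D$ is at most $4$. Combined with $|S'|=|S|-1$ and the hypothesis $|S|<\tfrac14(|B_S(G)|+|D_S(G)|)$, this gives $|S'|<\tfrac14(|B_{S'}(G')|+|D_{S'}(G')|)$, so $(G',S')$ is a counterexample smaller than $(G,S)$ with respect to~\eqref{eq:ddagger}, the desired contradiction. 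The main obstacle will be the embedding bookkeeping: a short case split on which of $u_2, x_2$ exist and on which pairs in $\{u_1,u_2\}\times\{x_1,x_2\}$ are already adjacent in $G$ is needed to verify that the chosen chords can always be inserted outerplanarly while providing the required degree compensation; once this is handled, the counting above is routine.
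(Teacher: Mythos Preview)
Your overall strategy --- delete $N[s_1]$, add compensating chords among $\{u_1,u_2,x_1,x_2\}$, and bound the loss in $|B\cup D|$ by $4$ --- is exactly the paper's approach, and your loss-counting matches theirs. However, there is a genuine gap in justifying that the required chords can always be inserted outerplanarly.

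You claim that after removing $N[s_1]$ the $u_i$'s and $x_j$'s become consecutive on the outer face, and you justify this by citing Claim~\ref{cl:Snocommon} (pairwise disjointness of $N(s_1), N(s_2), N(s_3)$). But disjointness of these neighborhoods does not rule out a path in $G$ from $N[s_2]$ to $N[s_3]$ that avoids $N[s_1]$ entirely --- such a path may run through vertices dominated by \emph{other} elements of $S$. If such a path exists, then after deleting $N[s_1]$ the two sides remain in the same component, and the cyclic order of $u_1,u_2,x_1,x_2$ on the new outer face can be interleaved in a way that blocks the three chords you need. Your proposed case split on ``which pairs in $\{u_1,u_2\}\times\{x_1,x_2\}$ are already adjacent in $G$'' does not address this, since the obstruction need not be a direct $u_ix_j$ edge.

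The paper closes this gap by first proving that $N[s_1]$ is a genuine vertex cut separating $N[s_2]$ from $N[s_3]$: any $N[s_2]$--$N[s_3]$ path avoiding $N[s_1]$ would, together with $s_1,v_1,v_2,w_1$, produce a $K_{2,3}$-minor (branch sets $\{s_1\}$ and $N[s_2]$ on one side, $\{v_1\},\{v_2\},N[s_3]$ on the other). Once the cut is established, the two resulting components can be independently re-embedded via Observation~\ref{obs:flipping}, after which an order such as $u_1,u_2,x_2,x_1$ is available and the three chords $u_1x_1,u_2x_1,u_2x_2$ go in without obstruction. This cut-plus-flip step is the missing ingredient in your sketch; once you supply it, the rest of your argument goes through as written.
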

\begin{proof}
   By Claim~\ref{cl:v1neqw1} we can assume $v_1 \neq w_1$ and by Claim~\ref{cl:oneofw2v2exists} we can assume $v_2$ exists. Suppose that $w_2$ does not exist.
    As in the previous case, $N[s_1]$ forms a vertex cut separating the component containing $N[s_2]$ from the component containing $N[s_3]$: if there was a path between $N[s_2]$ to $N[s_3]$ disjoint from $N[s_1]$, then we obtain a $K_{2,3}$-minor on vertex sets $N[s_2],\{s_1\}$ on one side and $\{v_1\}$, $\{v_2\}$, $N[s_3]$ on the other. When we delete $N[s_1]$, at most one of $u_1,u_2$ loses two neighbors, and the other (if it exists) loses only a single neighbor. The vertices $x_1,x_2$ can lose only a single neighbor. By Observation \ref{obs:flipping}, after deleting $N[s_1]$ we can align the components of $N[s_2]$ and $N[s_3]$ in such a way that we can add the edges from $\{u_1x_1,u_2x_1,u_2x_2\}$ to obtain $G'$. (For brevity, we handle the cases in which some of $u_2,x_2$ do not exist here as well, in which case we might add less edges.) After this process, we lost at most $4$ vertices of $B\cup D$, namely at most $v_1,v_2,w_1$ and one of the $u_i$ (if one of them lost two neighbors).
    Hence $(G', S \setminus s_1)$ is a counterexample and it is smaller than $(G, S)$ in terms of $|S|$, a contradiction.
    \cqed
\end{proof}

We have one final case in which $v_1,v_2,w_1,w_2$ all exist and are all distinct. Note that, as in Claim \ref{cl:w2exists} above, $N[s_1]$ forms a vertex cut separating $N[s_2]$ from $N[s_3]$. We break this problem into three subcases: The case where $x_2, u_2$ both exists, the case where exactly one of $x_2$ and $u_2$ exists and the case where neither $x_2$ nor $u_2$ exists. Since the details are not particularly illuminating, we will omit them for brevity. Appendix \ref{app:case-analysis} contains the full details of this case analysis. 
\begin{itemize}
        \item Suppose that $x_2$ and $u_2$ both exist. Note that at most one of $u_1,u_2$ and one of $x_1,x_2$ is adjacent to two vertices in $N[s_1]$. Let us assume without loss of generality that $x_1,u_2$ have at most one neighbor in $N[s_1]$. By Observation \ref{obs:flipping}, after deleting $N[s_1]$, we can re-embed the graph in a way that we can add the edges $u_1x_1,u_1x_2,u_2x_2$. This gives a smaller counterexample, since we have `fixed' the degrees of $u_1,u_2,x_1,x_2$ and only lost $N[s_1]\cap (B\cup D)$, which has size at most 4. Hence we obtain a smaller counterexample in terms of $|S|$, a contradiction.
        \item Suppose now that only $u_2$ exists. (The case in which only $x_2$ exists is analogous.)
        In this case we complete the proof by considering separately the sub-case where $s_2$ has a neighbor in $D$ and the sub-case where it does not. In both of these two sub-cases, we are able to prove there is a smaller counterexample (in terms of $|S|$) purely by analyzing the the possible edges in the graph. The details are included in Appendix \ref{app:only-one-of-x2-u2_exists}.
        \item Finally, we consider the case in which neither $u_2$ nor $x_2$ exists. If both $u_1$ and $x_1$ do not have degree exactly four, then we can remove $N[s_1]$ and add the edge $u_1x_1$; in this case we only lose a subset of $\{v_1,v_2,w_1,w_2\}$ from $B\cup D$. Hence we can assume by symmetry that $u_1$ has degree exactly four. We break this case into subcases where $s_2$ has no neighbor in $D$, where $s_2$ has only neighbors in $D \cup \{u_1 \}$ and where $s_2$ exactly one neighbor in $D$. In each case we are able to obtain a smaller counterexample (in terms of $|S|$) using elementary techniques. The details are included in Appendix \ref{app:neither-u2-nor-x2-exists}.
\end{itemize}

Above claims show that there can not be a counterexample to Lemma~\ref{lem:outernew}, which proves that Algorithm \ref{alg:1} computes a $5$-approximation of Minimum Dominating Set for outerplanar graphs.
\end{proof}
\section{Lower bound for outerplanar graphs}
\label{sec:lowerbound}
In this section we show that there is no deterministic local algorithm that finds a $(5 - \epsilon)$-approximation of a minimum dominating set on outerplanar graphs using $T$ rounds, for any $\epsilon>0$ and $T\in \mathbb{N}$. To do so we use a result from Czygrinow, Ha{\'n}{\'c}kowiak and Wawrzyniak \cite[pp.~87--88]{czygrinow2008fast} who gave a lower bound in the planar case. For $n\equiv 0 \mod 10$, they consider a graph $G_n$, which is a cycle $v_1,v_2,\dots,v_n,v_1$ where edges between vertices of distance two are added. They showed that for every local distributed algorithm $\mathcal{A}$ and every $\delta>0$ and $n_0\in \mathbb{N}$ there exists $n\geq n_0$ for which the algorithm $\mathcal{A}$ outputs a dominating set for $G_n$ that is not within a factor of $5-\delta$ of the optimal dominating set for $G_n$. Their graph $G_n$ is not outerplanar, but we can delete three of its edges to get an outerplanar graph $G^-_n$. The graph $G^-_n$ is a path $v_1 \dots v_n$ where all edges between vertices of distance two are added as in Figure \ref{fig:LowerBoundGraph}.\\

The argument of \cite{czygrinow2008fast} builds on a lower bound for local algorithms computing a maximum independent set, which in turn depends on multiple applications of Ramsey's theorem. A similar approach is used by \cite{hilke2014brief} to obtain the best-known lower-bound for planar graphs. Using the graph $G_n^-$, this approach can also be used to prove our result; the main idea is that since in the middle all the vertices `look the same', no local algorithm can do better than selecting almost all of them.

Alternatively, we can exploit the result of \cite{czygrinow2008fast} as follows. For any bound $T\in \mathbb{N}$ on the number of rounds, any vertex in $M=\{v_{2T+1},\dots,v_{n-2T-1}\}$ 
has the same local neighborhood in $G_n$ as in $G_n^-$. Since $G_n$ is rotation symmetric, a potential local algorithm also finds a dominating set $D$ for $G_n$ (for $n\geq 4T+2$), and with the result of \cite{czygrinow2008fast} we obtain $|D|\geq (5-\delta)\gamma(G_n)$. For $n$ sufficiently large with respect to $T$, the set $D$ is the same as the set $D'$ that the algorithm would give for $G_n$ up to at most $\delta n/10$. Since $n\equiv 0 \mod 10$, $\gamma(G_n)=\gamma(G_n^-)=\frac{n}5$ and we find the desired lower-bound $|D'|\geq \left(5-\frac{\delta}{2}\right)\gamma(G_n^-)\geq (5-\epsilon)\gamma(G_n^-) $ for $\delta$ small enough.

\section{Conclusion}

Through a rather intricate analysis of the structure of outerplanar graphs, we were able to determine that a very naive algorithm gives a tight approximation for minimum dominating set in outerplanar graphs in $O(1)$ rounds. While there are some highly non-trivial obstacles to extending such work to planar graphs, we believe that similar techniques can be used to vastly improve the state of the art for triangle-free planar graphs and for $C_4$-free planar graphs. In the first case, recall that a $32$-approximation is known~\cite{alipour2020distributed}, and there is a simple construction (a large $4$-regular grid) showing that $5$ is a lower bound. We believe that $5$ is the right answer. In the second case, an $18$-approximation is known~\cite{alipour2020local}, and there is no non-trivial lower bound. We refrain from conjecturing the right bound here -- we simply point out that there is no reason yet to think $3$ is out of reach. We believe that very similar techniques to the ones developed here can be used to obtain a $9$-approximation, and possibly lower.

\paragraph{Acknowledgments}
We thank the referees for helpful comments which improved the presentation of the paper.
\bibliography{bibliography_LocalDominating}
\newpage
\begin{appendices}
\section{The case when $v_1, v_2, w_1, w_2$ all exist and are all distinct}
\label{app:case-analysis}
This appendix will provide the details of the case analysis from end of the proof of Lemma \ref{lem:outernew}. Suppose $v_1, v_2, w_1,w_2$ all exist and all are distinct.
\subsection{The case when exactly one of $x_2, u_2$ exists}
\label{app:only-one-of-x2-u2_exists}
Suppose now that only $u_2$ exists. (The case in which only $x_2$ exists is analogous.) 
        Note that $s_2$ can have at most one neighbor in $D$. See also Figure \ref{fig:c100}.
        \begin{itemize}
    \item Suppose first that $s_2$ has a neighbor $d \in D$. We delete and add edges (if needed) and renumber such that $u_1$ is adjacent to $v_1$, $u_2$ to $v_2$ and $u_1$ to $u_2$,
     but no other edges among $\{u_1,u_2,v_1,v_2\}$ are present. Now we can add the chords $u_1s_1,u_2s_1$ to the cycle $s_1v_1u_1s_2u_2v_2s_1$. We delete $s_2$ and $d$. We have now lost at most four vertices from $B\cup D$: namely at most $v_1,v_2,d$ and one of the $u_i$ (if it was adjacent to $v_1$ and $v_2$ originally).

    \begin{figure}[!h]
        \centering
        \include{TikZ/fig:c100_1}
        %\section{$w2, x_2$ does not exist, $s_2$ has no nbr in $D$}
\scalebox{0.93}{ 
\begin{tikzpicture}[scale=0.5]
\node (before) at (0,0){
    \begin{tikzpicture}[scale=0.5]  
    \def\height{1.5};
    \def\width{2};
    
    \node[snode] (s_2) at (0, 0){$s_2$};
    \node[snode] (s_1) at (3*\width ,0){$s_1$};
    \node[snode] (s_3) at (6* \width, 0){$s_3$};
    
    \node[normal node] (v_1) at (2*\width, \height) {$v_1$};
    \node[normal node] (v_2) at (2*\width, -\height){$v_2$};
    \node[normal node] (w_1) at (4*\width, \height) {$w_1$};
    \node[normal node] (w_2) at (4*\width, -\height){$w_2$};
    
    \node[normal node] (u_1) at (1*\width, \height){$u_1$};
    \node[normal node] (u_2) at (1*\width, -\height){$u_2$};
    
    \node[normal node] (x_1) at (5*\width, 0){$x_1$};
   % \node[normal node] (x_2) at (5*\width, -1*\height){$x_2$};
    
    %edges
    \draw (s_1) to (v_1);
    \draw (s_1) to (v_2);
    \draw (v_1) to (u_1);
    \draw (v_2) to (u_2);
    \draw (s_2) to (u_1);
    \draw (s_2) to (u_2);
    \draw (s_3) to (x_1);
  %  \draw (s_3) to (x_2);
    \draw (w_1) to (x_1);
    \draw (w_2) to (x_1);
    \draw (s_1) to (w_1);
    \draw (s_1) to (w_2);

    \draw[maybe edge] (u_1) to (u_2);
    \draw[maybe edge] (u_1) to (v_2);
   \draw[maybe edge] (v_1) to (u_2);
   \draw[maybe edge] (v_1) to (v_2);

    \end{tikzpicture}
};
\node(after) at (before.east)[anchor=east,xshift=6cm]{
    \begin{tikzpicture}[scale=0.5]  
    \def\height{1.5};
    \def\width{2};
    
    \node[snode] (s_2) at (0, 0){$s_2$};
    %\node[snode] (s_1) at (3*\width ,0){$s_1$};
    \node[snode] (s_3) at (3* \width, 0){$s_3$};
    
   % \node[normal node] (v_1) at (2*\width, \height) {$v_1$};
 %   \node[normal node] (v_2) at (2*\width, -\height){$v_2$};
    %\node[normal node] (w_1) at (2*\width, \height) {$w_1$};
   % \node[normal node] (w_2) at (2*\width, -\height){$w_2$};
    
    \node[normal node] (u_1) at (1*\width, \height){$u_1$};
    \node[normal node] (u_2) at (1*\width, -\height){$u_2$};
    
    \node[normal node] (x_1) at (2*\width, 0){$x_1$};
   % \node[normal node] (x_2) at (5*\width, -1*\height){$x_2$};
    
    %edges
    \draw (u_1) to (s_2);
    \draw (u_2) to (s_2);

    %\draw[new edge, bend right = 15] (s_1) to (u_2);
    \draw (s_3) to (x_1);
  %  \draw (s_3) to (x_2);
    %\draw (w_1) to (x_1);
%    \draw (w_2) to (x_1);
    \draw[new edge] (u_1) to (u_2);
    \draw[new edge] (u_1) to (x_1);
    \draw[new edge] (u_2) to (x_1);
    
    \node[maybe new node] (z_1) at (-\width, \height) {$z_1$};
    \node[maybe new node] (z_2) at (-\width, -\height) {$z_2$};
    \draw[new edge] (z_1) to (u_1);
    \draw[new edge] (z_1) to (s_2);
    \draw[new edge] (z_2) to (s_2);
    \draw[new edge] (z_2) to (u_2);

    %\node[yellow node] (d) at (-\width, 0) {$d$};
    %\draw (s_2) to (d);

    \end{tikzpicture}
};
\draw [->, ultra thick] (before)--(after);
\end{tikzpicture}
}
        \caption{An illustration of the case $v_1, v_2, w_1, w_2, u_2$ all exist and are all distinct and $x_2$ does not exist. At top we show the case where $s_2$ has a neighbor $d \in D$. Some of the wavy edges may be present in $G$. As usual, there may be other edges present in $G$ that have not been drawn, but they are not relevant to our argument. At the bottom we illustrate the case where $s_2$ has no neighbor in $D$. For $i \in \{1,2 \}$, we add the vertices $z_i$ and edges $z_is_2, z_iu_i$ if needed to make $\text{deg}(u_i) \geq 4$ in $G'$. \textcolor{red}{Something is missing here.} }
        \label{fig:c100}
    \end{figure}
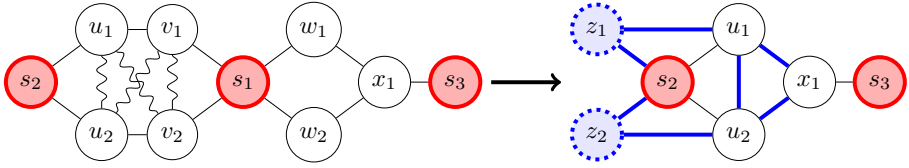
    
    \item Suppose now that $s_2$ has no neighbors in $D$. We remove $N[s_1]$ and add the edges $u_1x_1$,$u_2x_1$ and $u_1u_2$ if it is not already present. If $u_1$ has degree 3, then it has no neighbors outside of $u_2,x_1,s_2$ and so we may create a new vertex adjacent to both $u_1$ and $s_2$. Similarly, we can fix the degree of $u_2$ if needed. Note that since $s_1$ has degree at least four, it has no neighbors in $D$. We lose at most $v_1,v_2,w_1,w_2$ from $B\cup D$ and we found a smaller counterexample.
\end{itemize}
\subsection{The case when neither $u_2$ nor $x_2$ exists.} 
\label{app:neither-u2-nor-x2-exists}
If both $u_1$ and $x_1$ do not have degree exactly four, then we can remove $N[s_1]$ and add the edge $u_1x_1$; in this case we only lose a subset of $\{v_1,v_2,w_1,w_2\}$ from $B\cup D$. Hence we can assume by symmetry that $u_1$ has degree exactly four.
    \begin{itemize}
    \item We first handle the case in which $s_2$ has no neighbor in $D$. Since $u_1$ has degree exactly $4$, after removing $N[s_1]$ we can create a new vertex $v$ and add the edges $u_1v,s_2v,x_1v,u_1x_1$. As a result, we have lost at most $v_1,v_2,w_1,w_2$ from $B\cup D$ and found a smaller counterexample. See Figure \ref{fig:No_u2w2_s2_has_no_Dnbrs}. 
    \item Suppose now that $s_2$ has only neighbors in $D \cup \{u_1 \}$, which we name $d_1,d_2$ (where $d_2$ may or may not exist). We remove $N[s_2]\setminus \{u_1\}$ (at most three vertices), remove the edge $v_1v_2$ and add the edge $u_1s_1$. We again found a smaller counterexample as the only vertices we may have lost from $B\cup D$ are $v_1,v_2,d_1,d_2$.
    \item Suppose now that $s_2$ has exactly one neighbor $d\in D$. It may have another neighbor $y \neq u_1,d$, which if it exists, is not in $D$. We delete the vertices $s_2,d$ as well as the edges $u_1v_1$ and $v_1v_2$ (if these exist). As $u_1$ was a cut-vertex previously, we can now add the edges $u_1s_1$ and $ys_1$ (say along the path $u_1v_2s_1$) to ensure that the size of the dominating set has dropped by one whereas we lost at most $d,u_1,v_1,v_2$ from $B\cup D$. See Figure \ref{fig:No_w2x2_s2HasExactlyOneDNbr}.
\end{itemize}
\begin{figure}
    \centering
    \scalebox{0.93}{ 
\begin{tikzpicture}[scale=0.5]
\node (before) at (0,0){
    \begin{tikzpicture}[scale=0.5]  
    \def\height{1.5};
    \def\width{2};
    
    \node[snode] (s_2) at (0, 0){$s_2$};
    \node[snode] (s_1) at (3*\width ,0){$s_1$};
    \node[snode] (s_3) at (6* \width, 0){$s_3$};
    
    \node[normal node] (v_1) at (2*\width, \height) {$v_1$};
    \node[normal node] (v_2) at (2*\width, -\height){$v_2$};
    \node[normal node] (w_1) at (4*\width, \height) {$w_1$};
    \node[normal node] (w_2) at (4*\width, -\height){$w_2$};
    
    \node[normal node] (u_1) at (1*\width, 0){$u_1$};

    \node[normal node] (x_1) at (5*\width, 0){$x_1$};
   % \node[normal node] (x_2) at (5*\width, -1*\height){$x_2$};
    
    %edges
    \draw (s_1) to (v_1);
    \draw (s_1) to (v_2);
    \draw (v_1) to (u_1);
    \draw (v_2) to (u_1);
    \draw (s_2) to (u_1);
    \draw (s_2) to (u_1);
    \draw (s_3) to (x_1);
  %  \draw (s_3) to (x_2);
    \draw (w_1) to (x_1);
    \draw (w_2) to (x_1);
    \draw (s_1) to (w_1);
    \draw (s_1) to (w_2);

    \end{tikzpicture}
};
\node(after) at (before.east)[anchor=east,xshift=6cm]{
    \begin{tikzpicture}[scale=0.5]  
        \def\height{2.5};
        \def\width{2.5};
        \node[snode] (s_2) at (0,0){$s_2$};
        \node[normal node] (u_1) at (\width, 0){$u_1$};
        \node[normal node] (x_1) at (2*\width, 0){$x_1$};
        \node[snode] (s_3) at (3*\width, 0){$s_3$};
        \draw (s_2) to (u_1);
        \draw (s_3) to (x_1);
        \draw[new edge] (u_1) to (x_1);
        
        \node[special node] (v) at (\width, \height) {$v$};
        \draw[new edge] (s_2) to (v);
        \draw[new edge] (u_1) to (v);
        \draw[new edge] (x_1) to (v);

    %\node[yellow node] (d) at (-\width, 0) {$d$};
    %\draw (s_2) to (d);

    \end{tikzpicture}
};
\draw [->, ultra thick] (before)--(after);
\end{tikzpicture}
}
    \caption{An illustration of the reduction for the case where $u_2, w_2$ do not exist and $s_2$ has no neighbors in $D$. We only draw edges that are relevant to our argument.}
    \label{fig:No_u2w2_s2_has_no_Dnbrs}
\end{figure}
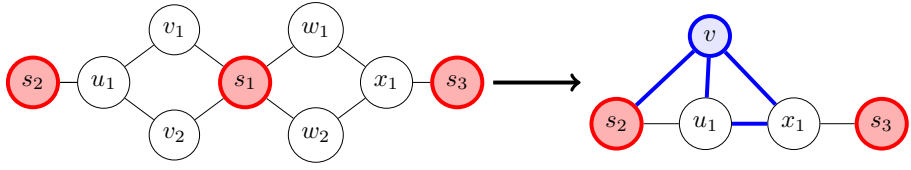

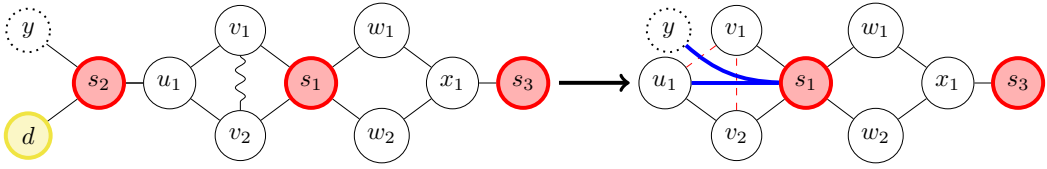
\begin{figure}
    \centering
    %\section{no $u_2, w_2$, $s_2$ has a single neighbour in $D$}
%stuff
\scalebox{0.93}{ 
\begin{tikzpicture}[scale=0.5]
\node (before) at (0,0){
    \begin{tikzpicture}[scale=0.5]  
    \def\height{1.5};
    \def\width{2};
    
    \node[snode] (s_2) at (0, 0){$s_2$};
    \node[snode] (s_1) at (3*\width ,0){$s_1$};
    \node[snode] (s_3) at (6* \width, 0){$s_3$};
    
    \node[normal node] (v_1) at (2*\width, \height) {$v_1$};
    \node[normal node] (v_2) at (2*\width, -\height){$v_2$};
    \node[normal node] (w_1) at (4*\width, \height) {$w_1$};
    \node[normal node] (w_2) at (4*\width, -\height){$w_2$};
    
    \node[normal node] (u_1) at (1*\width, 0){$u_1$};

    \node[normal node] (x_1) at (5*\width, 0){$x_1$};
   % \node[normal node] (x_2) at (5*\width, -1*\height){$x_2$};
    
    %edges
    \draw (s_1) to (v_1);
    \draw (s_1) to (v_2);
    \draw (v_1) to (u_1);
    \draw (v_2) to (u_1);
    \draw (s_2) to (u_1);
    \draw (s_2) to (u_1);
    \draw (s_3) to (x_1);
  %  \draw (s_3) to (x_2);
    \draw (w_1) to (x_1);
    \draw (w_2) to (x_1);
    \draw (s_1) to (w_1);
    \draw (s_1) to (w_2);
    
    \node[yellow node] (d) at (-\width, -\height){$d$};
    \draw(s_2) to (d);
    \draw[maybe edge] (v_1) to (v_2);
    \node[maybe node] (y) at (-\width, \height){$y$};
    \draw (s_2) to (y);
    \end{tikzpicture}

};
\node(after) at (before.east)[anchor=east,xshift=7cm]{
    \begin{tikzpicture}[scale=0.5]  
    \def\height{1.5};
    \def\width{2};
    
    %\node[snode] (s_2) at (0, 0){$s_2$};
    \node[snode] (s_1) at (3*\width ,0){$s_1$};
    \node[snode] (s_3) at (6* \width, 0){$s_3$};
    
    \node[normal node] (v_1) at (2*\width, \height) {$v_1$};
    \node[normal node] (v_2) at (2*\width, -\height){$v_2$};
    \node[normal node] (w_1) at (4*\width, \height) {$w_1$};
    \node[normal node] (w_2) at (4*\width, -\height){$w_2$};
    
    \node[normal node] (u_1) at (1*\width, 0){$u_1$};

    \node[normal node] (x_1) at (5*\width, 0){$x_1$};
   % \node[normal node] (x_2) at (5*\width, -1*\height){$x_2$};
    
    %edges
    \draw (s_1) to (v_1);
    \draw (s_1) to (v_2);
    \draw[non edge] (v_1) to (u_1);
    \draw[non edge] (v_2) to (v_1);
    \draw (v_2) to (u_1);
    \draw (s_3) to (x_1);
  %  \draw (s_3) to (x_2);
    \draw (w_1) to (x_1);
    \draw (w_2) to (x_1);
    \draw (s_1) to (w_1);
    \draw (s_1) to (w_2);

    \node[maybe node] (y) at (\width, \height){$y$};
    \draw[new edge, bend left = 20] (s_1) to (y);
    \draw[new edge] (u_1) to (s_1);

    \end{tikzpicture}
};
\draw [->, ultra thick] (before)--(after);
\end{tikzpicture}
}
    \caption{An illustration of the reduction for the case where $u_2, w_2$ do not exist and $s_2$ has exactly one neighbor $d \in D$. We use dashed red lines to emphasize that $v_1$ is not adjacent to $u_1, u_2$ in $G'$ and a wavy line to show that $v_1$ may be adjacent to $v_2$ in $G$. As usual, we do not draw edges that are not relevant to our argument.}
    \label{fig:No_w2x2_s2HasExactlyOneDNbr}
\end{figure}
\end{appendices}

\bibliographystyle{plainurl}

\end{document}